\journalname{Annals of the Institute of Statistical Mathematics}
\newtheorem{assumption}{}
\begin{document}
\title{Robust Variable Selection in High-dimensional Nonparametric Additive Model}
%
%
% author names and IEEE memberships
% note positions of commas and nonbreaking spaces ( ~ ) LaTeX will not break
% a structure at a ~ so this keeps an author's name from being broken across
% two lines.
% use \thanks{} to gain access to the first footnote area
% a separate \thanks must be used for each paragraph as LaTeX2e's \thanks
% was not built to handle multiple paragraphs
%

\author{Suneel Babu Chatla 
 and Abhijit Mandal}% <-this % stops a space
%\thanks{J. Doe and J. Doe are with Anonymous University.}% <-this % stops a space
%\thanks{Manuscript received April 19, 2005; revised September 17, 2014.}

\institute{Suneel Babu Chatla \at
              University of Texas at El Paso \\
              \email{sbchatla@utep.edu}           %  \\
%             \emph{Present address:} of F. Author  %  if needed
           \and
           Abhijit Mandal \at
             University of Texas at El Paso \\
             \email{amandal@utep.edu}
}

\maketitle
%%%%%%%%%%%%%%%%%%%%%%%%%%%%%%%%%%%%%%%%%%%%%%%%%%%%%%%%%%%%%%%%%%%%%%%%%%%%%%%%%%%%%%%%%%%%%%%%%%%%%%%%%%%%%%%%%%%%%%%%%%%%

\begin{abstract}
Additive models belong to the class of structured nonparametric regression models that do not suffer from the curse of dimensionality. Finding the additive components that are nonzero when the true model is assumed to be sparse is an important problem, and it is well studied in the literature. The majority of the existing methods focused on using the $L_2$ loss function, which is sensitive to outliers in the data. We propose a new variable selection method for additive models that is robust to outliers in the data. The proposed method employs a nonconcave penalty for variable selection and considers the framework of B-splines and density power divergence loss function for estimation. The loss function produces an M-estimator that down weights the effect outliers. Our asymptotic results are derived under the sub-Weibull assumption, which allows the error distribution to have an exponentially heavy tail. Under regularity conditions, we show that the proposed method achieves the optimal convergence rate. In addition, our results include the convergence rates for sub-Gaussian and sub-Exponential distributions as special cases. We numerically validate our theoretical findings using simulations and real data analysis.

%\vspace{9pt}
%\noindent {\it Key words and phrases:}
%Nonparametric additive model, Nonconcave penalty, M-estimator, sub-Weibull, Exponentially heavy-tailed.

\keywords{ Nonconcave penalty \and M-estimator \and  sub-Weibull \and  Exponentially heavy-tailed }

\end{abstract}

\iffalse
\def\thefigure{\arabic{figure}}
\def\nuble{\arabic{table}}

\renewcommand{\theequation}{\thesection.\arabic{equation}}

\fontsize{12}{14pt plus.8pt minus .6pt}\selectfont

\fi

\section{Introduction}

Additive models introduced by \cite{friedman1981projection} and \cite{stone1985additive} and popularized by \cite{hastie1986gam} belong to the class of structured nonparametric regression models. They are known for their flexibility and do not suffer from the curse of dimensionality, which arises in the estimation of multivariate nonparametric functions. The pioneering work of \cite{stone1985additive} shows that the additive model can be estimated with the same optimal rate of convergence for univariate functions. Let $(Y_i, X_{i1}, \ldots, X_{ip})$, $i=1,2,\ldots,n$, be $n$ independent and identically distributed (i.i.d.) copies of $(Y, X_1, \ldots, X_p)$, where $Y$ is a scalar response and $\{ X_j, j=1,\ldots, p\}$ are covariates. We consider the following model:
\begin{align}
    Y_i &= \mu + \sum_{j=1}^p g_j(X_{ij}) + \epsilon_i, \qquad i=1,\ldots,n,
    \label{eqn:model-def}
\end{align}
where $\epsilon_i$ is the random error with mean zero and finite variance, $\sigma^2$, and $g_j(\cdot)$'s are unknown functions with $\mathbb{E}[g_j(X_{ij})]=0$, $j=1,2,\ldots,p$. When $p$ is much smaller than $n$, the popular methods for estimation of (\ref{eqn:model-def}) include backfitting \cite{buja1989linear,hastie1986gam}, penalized splines \cite{wood2006generalized}, smooth backfitting \cite{mammen1999existence}, and marginal integration \cite{linton1995kernel}.

Since the last decade, there has been a growing interest in high-dimensional additive models. When $p$ is large, it is natural to assume sparsity in the component functions so that only a few selected component functions are used to fit the additive model. Early contributions primarily focused on fitting splines for component functions and achieving sparsity through group penalties; see \cite{lin2006component,meier2009high,ravikumar2009sparse} and \cite{huang2010variable} for contributions related to group LASSO type of penalties and  \cite{xue2006additive},   \cite{xue2009consistent},  \cite{wang2007group,wang2008variable} and \cite{huang2012selective} for nonconvex penalties such as group SCAD and MCP.  Recently, a new avenue of research, commonly referred to as the convex-nonconvex strategy, has been proposed to mitigate the drawbacks of nonconvex penalties. Especially, \cite{selesnick2017sparse} proposed a generalized minimax concave penalty for the regularized least squares problems, the properties of which are investigated further by \cite{liu2021convex}. However, the properties of additive models are still being investigated using this penalty.  Similarly, \cite{chouldechova2015generalized} proposed a traditional greedy component selection method. We also refer to \cite{amato2016additive} for a recent review of the model selection with additive models and \cite{sadhanala2019additive} for the most recent contributions to trend filtering.

In the context of high-dimensional regression, outliers cannot be effectively detected using scatter plots of the predictor variables. 
To identify outliers, it is necessary to fit a regression model and estimate the associated error terms. A key limitation of classical regression methods is that they tend to be influenced by outliers, which can obscure the detection of moderate outliers while misleadingly identifying normal observations as outliers. This masking effect is exacerbated in non-linear models. Consequently, a robust approach is essential for handling outliers effectively.

Most of the above estimation methods use the $L_2$ loss function, which is very sensitive to outliers in the data. While there has been a growing interest, the literature on robust estimation of additive models is limited. \cite{bianco1998robust} proposed robust kernel estimators for additive models.   \cite{alimadad2011outlier} considered a robust quasi-likelihood approach to estimate additive models in the presence of outliers. \cite{croux2012robust} propose a robust estimation approach using penalized splines. \cite{wong2014robust} consider a penalized M-estimation approach for additive models. \cite{boente2017marginal} consider a marginal integration method with robust local polynomial fitting, and \cite{boente2017robust} proposes robust estimators using a backfitting approach. Recently, \cite{boente2023robust} proposed a robust estimation procedure for the partial linear additive model using B-spline basis functions. Most of these approaches discuss the robust estimation of additive models in the finite-dimension case and do not discuss the model selection. In a recent study, \cite{amato2022wavelet} proposed a robust variable selection method using wavelets.

This study considers a new approach for robust estimation of a high-dimensional additive model using density power divergence (DPD) measure with a nonconcave penalty function. \cite{MR1665873} introduced the DPD measure as a robust extension of the well-known Kullback-Leibler divergence. The estimator obtained from the DPD measure is an M-estimator and produces improved performance in the presence of outliers and a heavy-tailed distribution. \cite{mandal2019robust,mandal2022robust} proposed a DPD-based robust variable selection method for multiple linear regression (MLR). %For the MLR model, there are several robust techniques for analyzing high-dimensional data \cite{ MR2836768, MR2796868, MR2949353, MR3189488}. 
However, to the best of our knowledge, an efficient method for the robust variable selection for the nonparametric additive model is missing in the statistics literature.

Our contributions to this study are twofold.
\begin{itemize}
\item The proposed method considers the DPD loss function and employs a nonconcave penalty. Given the computational simplicity of the DPD loss function, the proposed method is computationally faster. Moreover, it is easier to interpret the DPD estimator than other M-estimators, as the former is derived from a distance-like measure. 

    \item In terms of theory, our results assume the error to be sub-Weibull, which is a more general condition than the sub-Gaussian or the sub-Exponential conditions used in the literature. Our theoretical results are new to the literature on nonparametric additive models.
\end{itemize}

The remainder of the paper is organized as follows. In Section \ref{sec:dpd}, we provide a brief introduction to the density power divergence and the folded-concave penalty function. The estimation method using polynomial splines and the penalized likelihood is discussed in Section \ref{sec:method}. In Section \ref{sec:asymp}, we provide the asymptotic results of the proposed estimators. Section \ref{sec:simulations} presents the finite sample performance of the proposed method using simulations, and Section \ref{sec:data-analysis} illustrates the performance of the proposed method using a real example. The Supplementary Material includes the proofs and the necessary lemmas for the asymptotic results.

\section{Background} \label{sec:dpd}
This section briefly introduces the density power divergence measure \cite{MR1665873}, which estimates model parameters robustly, and the nonconcave penalty function \cite{lv2009unified}, which applies regularization.

\subsection{Density Power Divergence (DPD)}

A density-based power divergence is a family of statistical ``distance'' measures used to compare the goodness-of-fit of a model distribution to a reference distribution.
Suppose the true data-generating distribution is $m$,  and we model it with a parametric distribution $f_\theta$ for $\theta \in \Theta$. Then, \cite{MR1665873} defined the DPD measure between $f_\theta$ and $m$ as
%\begin{tiny}
	\begin{equation}
	d_{\nu}(f_\theta, m) = 
	\left\{
	\begin{array}{ll}
		\int_y\left\{ f_\theta^{1+\nu}(y)-\left( 1+\frac{1}{\nu}\right) f_\theta^{\nu }(y)m(y)+%
		\frac{1}{\nu}m^{1+\nu}(y)\right\} dy, &\text{for}\mathrm{~}\nu>0, \\%
		[2ex]
		\int_y m(y)\log\left( \displaystyle\frac{m(y)}{f_\theta(y)}\right) dy, & \text{for}%
		\mathrm{~}\nu=0,%
	\end{array}
	\right. 
	\label{eqn:dpd}
	\end{equation}    
%\end{tiny}
	where $\nu$ is a tuning parameter. 	For $\nu=0$, the DPD is obtained as a limiting case of $\nu \rightarrow 0^+$, and the measure is called the  Kullback-Leibler divergence. 
	Given a parametric model $f_\theta$, we estimate $\theta$ by minimizing the DPD measure with respect to $\theta$ over its parametric space $\Theta$. We call the estimator $(\widehat{\theta})$ the minimum density power divergence estimator (MDPDE). 
	For $\nu=0$, minimizing the DPD loss function (\ref{eqn:dpd}) is equivalent to maximizing the log-likelihood function. Therefore, the MLE is a special case of the MDPDE. %Furthermore, the tuning parameter $\nu$ controls the trade-off between efficiency and robustness of the MDPDE -- robustness measure increases if $\nu$ increases, but at the same time, efficiency decreases. 	A data-dependent choice of the optimum $\nu$ is discussed in Section \ref{sec:opt_alpha}.

Consider a scenario where $m$ contains an $\epsilon$ proportion of outliers from an arbitrary distribution $\chi$, expressed as $m(y) = (1-\epsilon) f_{\theta_0}(y) + \epsilon \chi(y)$, with $0 \leq \epsilon < 0.5$ and $\theta_0 \in \Theta$ denoting the true value of the parameter. We assume the existence of a small positive constant $\nu_0$, such that $\eta(\nu) = \int_{y} f_{\theta_0}^\nu(y) \chi(y) \, dy$ is sufficiently small for $\nu > \nu_0$. This small value of $\eta(\nu)$ confirms that $\chi$ is an outlying distribution, as its effective mass resides in the tail of the model distribution $f_{\theta_0}$. Under this framework, the MDPDE with $\nu > \nu_0$ effectively mitigates the influence of outliers, resulting in a robust and consistent estimator \cite{fujisawa2008robust,das2022testing}. Additionally, the tuning parameter $\nu$ controls the trade-off between efficiency and robustness of the MDPDE; as $\nu$ increases, the robustness measure improves, but this is accompanied by a reduction in efficiency. A data-dependent approach for selecting the optimal $\nu$ is explored in Section \ref{sec:opt_alpha}.

\subsection{Nonconcave Penalty}
There are multiple penalty functions available in the literature for model regularization. Popular approaches include the $L_2$ penalty, which provides the ridge regression estimates, and the $L_0$ penalty, which yields the best subset selection. Then, there is $L_q$, $q \in (0,2)$, which bridges the gap (bridge regression) between them \cite{frank1993statistical}. Similarly, LASSO uses the $L_1$ penalty \cite{tibshirani1996regression}. On the other hand, \cite{lv2009unified} introduce a family of folded-concave penalties, including the SCAD penalty \cite{fan1997comments,fan2001variable}, that bridges $L_0$ and $L_1$ penalties.
In a seminal paper, \cite{fan2001variable} discuss that penalty functions yield estimators with the following three properties:
\begin{itemize}
    \item  \emph{Unbiasedness:} When the absolute value of the true parameter is large in magnitude, the penalty function produces nearly unbiased estimators.
    \item \emph{Sparsity:} The penalty function automatically sets the small estimated coefficients to zero to reduce the model dimension.
    \item \emph{Continuity:} The resulting estimator should be continuous in data to avoid instabilities in model predictions. 
\end{itemize}
Let $P_\lambda(t)$ denote the penalty function with the penalty parameter $\lambda$. The derivative of the SCAD penalty is 
\begin{align*}
    P_{\text{SCAD}, \lambda}'(t) &= \lambda \left\{ I(t \le \lambda) + \frac{(a\lambda -t)_+}{(a-1)\lambda} I(t > \lambda) \right\}, \quad t \ge 0,
\end{align*}
for some $a>2$, where often $a=3.7$ is considered in practice. The derivative of the minimax concave penalty (MCP) \cite{zhang2010nearly} is given as $P_{\text{MCP},\lambda}'(t)=(a\lambda-t)_+/a$. It is noted that the MCP translates the flat part of the derivative of SCAD to the origin. On the other hand, the SCAD penalty takes off at the origin as the $L_1$ penalty and then levels off \cite{fan2011nonconcave}. In this study, we consider penalty functions $P(t;\lambda)=\lambda^{-1}P_{\lambda}(t)$ that satisfy the following condition, which is considered in \cite{lv2009unified}.

\emph{Condition 1:} The penalty function $P(t;\lambda)=\lambda^{-1}P_{\lambda}(t)$ is increasing and concave in $t \in [0,\infty)$, and has a continuous derivative $P'(t;\lambda)$ with $P'(0+, \lambda)>0$. Furthermore, $P'(t;\lambda)$ increases in $\lambda \in (0, \infty)$ and $P'(0+; \lambda)$ is independent of $\lambda$.

We further note that, as mentioned in \cite{fan2011nonconcave}, while SCAD, MCP, and $L_1$ penalties satisfy Condition 1, the $L_1$ penalty is not unbiased, and MCP violates the continuity property.

\section{Methodology}\label{sec:method}

\subsection{Spline Basis}
Without loss of generality, assume that each covariate $X_j$ in (\ref{eqn:model-def}) takes values in $[0,1]$. Similar to \cite{huang2010variable}, we use polynomial splines to approximate the component functions $g_j(\cdot)$, $j=1,2,\ldots,p$, in (\ref{eqn:model-def}). Specifically, we consider the B-spline basis due to its computational simplicity; however, the proposed approach can be extended to other basis functions with suitable modifications. Let $\mathcal{G}_n$ be the space of polynomial splines on $[0,1]$ of order $r \ge 1$. Let 
\begin{align*}
 t_{-r+1}=&\cdots=t_{-1}=t_0 = 0 < t_1 < \cdots < t_K < 1=t_{K+1}= \cdots = t_{K+r},   
\end{align*}
be a knot sequence with $K$ interior knots, where $K \equiv K_n$ increases with sample size $n$, such that $\max_{1 \le l \le K+1 }| t_l-t_{l-1}|=O(1/K_n)$.  As defined in \cite{stone1985additive} and \cite{schumaker2007spline}, $\mathcal{G}_n$ consists of functions $g(\cdot)$ that satisfy the following properties:
\begin{enumerate}
    \item the restriction of $g(\cdot)$ in each interval $I_{l}=[t_l,t_{l+1}]$, $l=0, 1,\ldots,K_n$,  is a polynomial of degree $r-1$; 
    \item $g(\cdot)$ is $r-2$ times continuously differentiable on $[0,1]$ for $r \ge 2$.
\end{enumerate}
Then, there exists a normalized B-spline basis $\{b_{k}, 1 \le k \le K_n+r\}$  in $\mathcal{G}_n$ such that for any function $g_{n0j} \in \mathcal{G}_n$, we can write 
\begin{align}
    g_{n0j}(x) &= \sum_{k=1}^{m_n}\beta_{kj}b_{k}(x), \ 1 \le j \le p, \text{  and  } m_n=K_n+r-1.
    \label{eqn:mn}
\end{align}
%\suneel{The last basis function is never used. Please check it.}
For simplicity in proofs, equally-spaced knots are considered; however, other knot sequences may be used with similar asymptotic results.
Similar to \cite{li2012scad}, to ensure the identifiability of the component functions, we adopt the normalized B-spline space, denoted by $\mathcal{G}_n^0$  
\begin{align*}
    \left\{ B_{kj}(x) =  b_{k}(x)-  \frac{1}{n}\sum_{i=1}^n b_{k}(X_{ij}), ~  1 \le k \le m_n, 1 \le j \le p \right\}.
\end{align*}

For convenience in notation, let
%\begin{tiny}
\begin{equation*}
    Z_i =(1, B_{11}(X_{i1}), \ldots, B_{m_n1}(X_{i1}), \ldots, B_{1p}(X_{ip}), \ldots, B_{m_np}(X_{ip}))^T,
    \end{equation*}
%\end{tiny}
\vspace{-.5cm}
    \begin{equation*}
    \beta_{j} =(\beta_{1j}, \ldots, \beta_{m_nj})^T,\quad B_j=(B_{1j}, \ldots, B_{m_nj})^T,
\end{equation*}
and $Z=(Z_1, \ldots,Z_n)^T$, $\beta^T=(\mu, \beta_1^T, \ldots, \beta_p^T)$. Then, we approximate the additive function $g_j(\cdot)$ in model \eqref{eqn:model-def} as
\begin{align}
    g_{j}(x) &\approx  g_{nj}(x) := \sum_{k=1}^{m_n} \beta_{kj}B_{kj}(x)= B_j^T\beta_j, \mbox{ for }  j = 1, 2, \ldots, p, \label{eqn:norm-splinebasis}
\end{align}
where the basis functions $B_{kj}(x) \in \mathcal{G}_n^0$.

\subsection{Penalized Likelihood using DPD Framework}
As we are dealing with a robust estimator, we do not assume that the true distribution $m(\cdot)$ in (\ref{eqn:dpd}) or the error term $\epsilon_i$ in model \eqref{eqn:model-def}  is a normal distribution, but it is a distribution with an exponentially heavy-tail (please see assumption \ref{as:5-error} in Section \ref{sec:asymp}). However, for simplicity, the target distribution $f_\theta(\cdot)$ in (\ref{eqn:dpd}) is assumed to be $N(0,\sigma^2)$ for all $i=1, 2, \cdots, n$. The theory proceeds similarly to other choices of the target distribution $f_\theta(\cdot)$ under mild conditions. Thus, the model parameter $\theta = (\beta^T, \sigma^2)^T$, with $\theta \in \Theta$, is robustly estimated to match the target distribution. 
%but rather a contaminated normal distribution with $\omega$ proportion outliers, where $0\leq \omega <0.5$. However, the target distribution for $\varepsilon_{i}$ is $N(0,\sigma^2)$ for all $i=1, 2, \cdots, n$. Thus, the model parameter $\theta = (\beta^T, \sigma^2)^T$, with $\theta \in \Theta$, is robustly estimated to match the target distribution. 
We denote the target distribution of $Y_i$, i.e., $N(Z_i^T \beta, \sigma^2)$, as $f_{\theta}(Y_i|Z_i)$, or simply
\begin{align}
    f_i := f(\beta; X_i, Y_i)= \frac{1}{\sqrt{2\pi}\sigma} \exp\left(-\frac{1}{2 \sigma^2}\left(Y_i-Z_i^T\beta \right)^2\right), ~ i=1,2,\ldots,n. \label{eqn:fi}
\end{align}
%It is also referred to as the model distribution. 

%The following assumption is needed to define the true data generating distribution. 
%\begin{assumption}  \label{assumption_gy}
%Suppose the true data generating distribution $m(Y_i)$ contains $\omega$ proportion outliers from an arbitrary distribution $\chi(Y_i)$, i.e., $m(Y_i) = (1-\omega) f_{\theta_0}(Y_i)  + \omega \chi(Y_i)$, where $0\leq \omega <0.5$ and $\theta_0 \in \Theta$. We assume that there exist a small positive number $\nu_0$, such that $\eta(\nu) = \max_i \int_{Y_i} f_{\theta_0}^\nu(Y_i) \chi(Y_i) dY_i$ is sufficiently small for $\nu > \nu_0$.
%\end{assumption}

%A small value of $\eta(\nu)$ ensures that  $\chi(\cdot)$ is an outlying distribution as the effective mass of $\chi(\cdot)$ lies at the tail of the model distribution $f_{\theta}(\cdot)$ \cite{fujisawa2008robust}. Here, we relaxed the normality assumption from model \eqref{eqn:model-def}; however, the main structure of the true distribution should be normal, only the tails may be different. If the main structure of the block distributions is not normal, one may consider a different model for $f_{\theta}(Y_i)$. Although all the calculations in this paper are based on the normal model, one may follow the same techniques for an arbitrary model. 

Using Equation  (\ref{eqn:norm-splinebasis}), the minimization problem for model (\ref{eqn:model-def}) with the DPD loss function (\ref{eqn:dpd}), for $\nu>0$, can be formulated as 
\begin{align}
    L_{\nu}( \beta, \sigma^2; \lambda) &= \frac{1}{n} \sum_{i=1}^n V_i(\beta,\sigma^2; \lambda, \nu) + \mathcal{P}_{\lambda} + c(\nu), \label{eqn:plike}    
\end{align}
where $c(\nu)= \nu^{-1}\int_y m^{1+\nu}(y)dy$, which is free of the model parameters, and
\begin{align}
    V_i(\beta,\sigma^2; \lambda, \nu) &:= \frac{1}{(2\pi)^{\nu/2} \sigma^{\nu}\sqrt{1+\nu}}- \frac{1+\nu}{\nu}f_i^{\nu}.
    \label{eqn:vi}
\end{align}
Here, the penalty term in (\ref{eqn:plike}) is defined as
\begin{align*}
    \mathcal{P}_{\lambda} &= \sum_{j=1}^p P_{\lambda}(\Vert \beta_{j} \Vert_2),
\end{align*}
where  $\Vert \beta_{j} \Vert_2^2= \beta_j^T D_j \beta_j$ with $(k,k')$th entry of $D_j= \int_0^1  B_k(t) B_{k'}^T(t) dt$  and $P_{\lambda}(\cdot)$ is a nonconcave penalty function defined in Condition 1 in Section \ref{sec:dpd}. The estimated component functions can be expressed as $\widehat{g}_{nj}= B_j^T\widehat{\beta}_j$ for $j=1,2,\ldots,p$.

The estimators of $\beta$ and $\sigma^2$ in (\ref{eqn:plike}) are obtained by minimizing $L_{\nu}(\beta, \sigma^2; \lambda)$ over $ \beta \in \mathbb{R}^{pm_n+1}$ and $\sigma >0$. If the $i$-th observation is an outlier, the value of $f_i$ is very small compared to other samples. When $\nu > 0$, the second term of \eqref{eqn:vi} is negligible for that $i$; thus, the resulting estimator becomes insensitive to the outlier. On the other hand, when $\nu=0$, we have $V_i(\beta,\sigma^2; \lambda, \nu) = -\log(f_i)$, and it diverges as $f_i \rightarrow 0$. Hence, the maximum likelihood breaks down in the presence of outliers because they dominate the loss function.

\subsection{DPD Estimation Algorithm} \label{sub_sec_dpd_algo}
Differentiating the expression \eqref{eqn:plike} with respect to $\beta$ and $\sigma^2$, we obtain the following normal equations:
	\begin{align} 
		-   \sum_{i=1}^n (Y_i - Z_i^T \beta ) Z_i f_i^\nu
		+  \partial \mathcal{P}_{\lambda} &= 0,
		\label{est_beta}\\
		-\frac{n\nu}{(2\pi)^{\nu/2} \sigma^\nu (1+\nu)^{3/2}} +   \sum_{i=1}^n &\left\{1 - \frac{(Y_i - Z_i^T \beta )^2}{\sigma^2} \right\} f_i^\nu = 0,
		\label{est_sigma}
	\end{align}
 where $\partial \mathcal{P}_{\lambda}$ is the subdifferential of $\mathcal{P}_{\lambda}$ with respect to $\beta$. 
We define a diagonal matrix $W_\nu$ with diagonal elements $f_1^\nu, f_2^\nu, \cdots, f_n^\nu$. When $\lambda=0$, rearranging terms in  \eqref{est_beta} and \eqref{est_sigma}, we find the unpenalized minimum density power divergence estimator (unpenalized MDPDE), $(\widetilde{\beta}, \widetilde{\sigma}^2)$, using the following iterative algorithm:
	\begin{align} 
\widetilde{\beta} &=  (Z^T \widehat{W}_\nu Z)^{-1} Z^T \widehat{W}_\nu Y,\\
\widetilde{\sigma}^2 &= \frac{ 1 }{ \text{trace}(\widehat{W}_\nu) } \Bigg[\frac{1}{n} (Y - Z^T \widetilde{\beta} )^T \widehat{W}_\nu (Y - Z^T \widetilde{\beta} ) + \frac{\nu}{(2\pi \widehat{\sigma}^2)^{\nu/2}  (1+\nu)^{3/2}}\Bigg], \label{mdpde_sigma}
	\end{align}
 where $Y=(Y_1,\ldots,Y_n)^T$.
Now, similar to the SCAD estimator in the case of the square-error loss function, we derive $\widehat{\beta}_{SCAD}$, the penalized MDPDE of $\beta$ using the SCAD penalty function. Let us define $Y^* = Y - \widehat{\mu}$, where $\widehat{\mu}$ is the intercept of unpenalized MDPDE. Setting $Z_j^T \widehat{W}_\nu Z_j = I_{m_{nj}}$ for all $j=1,2,\cdots, p$, the penalized MDPDE  $\widehat{\beta}_{j, SCAD}$ corresponding  to the $j$-th predictor is obtained as
\begin{equation}
    \widehat{\beta}_{j, SCAD}  =  F_{\lambda, \gamma}^{SCAD}(Z_j^*),
\end{equation} 
where $Z_j^* = Z_j^T W_\nu^{1/2} (Y^* - Z_{-j} W_\nu^{1/2} \beta_{-j})$ with $Z_{-j}$ denoting all columns of $Z$ excluding $Z_j$; similarly, $\beta_{-j}$ denotes all elements of $\beta$ excluding $\beta_j$, and for $\gamma > 2$,
\begin{equation}
F_{\lambda, \gamma}^{SCAD}(x) =  \left\{
	\begin{array}{ll}
S(x,\lambda), & \text{ if } \Vert x \Vert \leq 2 \lambda,\\
S\left(x,\frac{\gamma \lambda}{\gamma -1}\right) \frac{\gamma - 1}{\gamma -2}, & \text{ if } 2\lambda < \Vert x \Vert \leq \gamma \lambda,\\
x, & \text{ if } \Vert x \Vert > \gamma \lambda,
\end{array}
	\right.     
\end{equation}    
where
\begin{equation}
S(x, t) =  \left( 1 -\frac{t}{\Vert x \Vert} \right)_{+} x,     
\end{equation} 
is a multivariate soft-threshold operator \cite{huang2012selective}. Finally, $\widehat{\sigma}^2$, the penalized MDPDE of $\sigma^2$, is estimated from Equation \eqref{mdpde_sigma} replacing $\widetilde{\beta}$ by $\widehat{\beta}_{SCAD}$. The expressions for the estimator using group MCP and group LASSO penalties are also obtained in a similar way.

\subsection{ Model Selection Criteria}

To implement our algorithm, we start with a fixed number of spline basis functions, $m_n$, for all covariates. We need an adaptive procedure to choose the regularization parameter $\lambda$ based on a given data set. While cross-validation (CV) is used for this purpose, it is not appropriate in robust regression since the test data also contain outliers. Therefore, we apply different information criteria to find the optimum $\lambda$. Specifically, we consider four methods: Akaike information criterion (AIC), Bayesian information criterion (BIC), extended Bayesian information criterion (EBIC), and Mallow's Cp statistic \cite{cl1973some,li2012scad}, which are defined as follows:   
\begin{align} \label{IC}
    \text{AIC}(\lambda) &= n \log (n \widehat{\sigma}^2) + 2 \ \text{df},\nonumber\\
    \text{BIC}(\lambda) &= n \log (n \widehat{\sigma}^2) +   \text{df}\ \log n,\nonumber\\
    \text{EBIC}(\lambda) &= n \log (n \widehat{\sigma}^2) +  (\log n + \log p) \ \text{df},\\
    \text{Cp}(\lambda) &= \frac{n\widehat{\sigma}^2}{\widehat{\sigma}_u^2} - n + 2 \ \text{df},\nonumber
\end{align}
where $\widehat{\sigma}^2$ is the estimate of $\sigma^2$ obtained from the sub-model using the algorithm given in Section \ref{sub_sec_dpd_algo}, and $\widehat{\sigma}_u$ is an unbiased and robust estimator of $\sigma$ preferably from the full model. Here, df is the degrees of freedom of the sub-model, i.e. the dimension of non-zero $\beta$ coefficients obtained from the group SCAD estimator. It should be noted that $\widehat{\sigma}^2$ and df are functions of $\lambda$. We select the optimum $\lambda$ for each case that minimizes the corresponding information criterion.

The AIC and Cp have similar performance as they are equivalent in the linear models. The BIC imposes more penalty on the model than the AIC when $n>7$. We prefer EBIC in high-dimensional data, as it adjusts the bias due to increased dimensionality \cite{chen2008extended}. As our method is not sensitive to the choice of the information criteria, one may use other criteria, e.g.,  GIC \cite{konishi1996generalised}, ERIC \cite{hui2015tuning} and GBIC \cite{konishi2004bayesian}. We may also consider a robust information criteria in this setup \cite{mandal2022robust}. 

%Need to include GIC \citealp{konishi1996generalised} (Konishi and Kitagawa 1996 Biometrika), GBIC \citealp{konishi2004bayesian} (Konishi et al. 2004 Biometrika), and ERIC \citealp{hui2015tuning} (Hui et al. 2015 JASA) as the second reviewer suggested. 

\subsection{Choosing the DPD parameter}\label{sec:opt_alpha}
The performance of the MDPDE depends on the choice of the tuning parameter $\nu$ that controls the robustness and efficiency of the estimator. If the data contain no outliers, the MLE or a small value of $\nu$ produces the best results. On the other hand, in the presence of outliers, we need a large value of $\nu$. The optimum value of $\nu$ depends on the proportion of outliers and their magnitude. The main challenge is that we rarely have any prior knowledge of outliers. Therefore, we need an adaptive choice of $\nu$ that is optimum for the given data set. In the parametric model, one can estimate the mean square error (MSE) of the parameter of interest and minimize it iteratively as a function of $\nu$ \cite{mandal2023robust}. However, finding the MSE of the additive components in model \eqref{eqn:model-def} is difficult. Therefore, we will use the Hyvarinen score proposed by \cite{hyvarinen2005estimation}. It is easy to calculate, as it only requires the first and second-order partial derivatives of the DPD measure for an observation.  

The Hyvarinen score constructs a pseudo-likelihood based on the DPD measure. Note that the empirical version of DPD in Equation \eqref{eqn:plike} can be written as a function of $y_i$ as $\sum_{i=1}^n V^*_\nu(y_i; \beta, \sigma^2, \lambda)$, where $n V^*_\nu(y_i; \beta, \sigma^2, \lambda) = V_i(\beta,\sigma^2; \lambda, \nu) + \mathcal{P}_{\lambda} + c(\nu)$. So, we assume an unnormalized density of $y_i$ as $\exp[ V^*_\nu(y_i; \beta, \sigma^2, \lambda)]$. Then, following \cite{sugasawa2021selection}, the Hyvarinen score is given as
\begin{small}
\begin{align}
    S_n(\nu) &= \frac{1}{n} \sum_{i=1}^n \left\{ - 2 \frac{\partial^2}{\partial y_i^2} V^*_\nu(y_i; \beta, \sigma^2, \lambda) + \left(\frac{\partial}{\partial y_i} V^*_\nu(y_i; \beta, \sigma^2, \lambda) \right)^2 \right\} \nonumber\\
&= \frac{1+\nu}{n \sigma^4} \sum_{i=1}^n \left\{\nu(y_i - x_i^T \beta)^2 - \sigma^2+ (1+\nu) (y_i - x_i^T \beta)^2 f_i^\nu \right\} f_i^\nu .
\label{h_score}
\end{align}    
\end{small}
The optimum $\nu$ is selected that minimizes $S_n(\nu)$. We use a grid search for this purpose. For a given data set, we first find the optimum $\lambda$ that minimizes an information criterion from \eqref{IC}. Then, the Hyvarinen score is calculated from Equation \eqref{h_score} by plugging in the estimates of $\beta$ and $\sigma$. Our final DPD estimator is the one that minimizes the Hyvarinen score.

\section{Asymptotic results} \label{sec:asymp}

In this section, we present the asymptotic results for the estimators defined in Section \ref{sec:method}. We need the following conditions for our asymptotic results. 
\begin{assumption} \label{as:1-function-class}
The additive functions $\{g_j\}_1^p$ belong to a class of functions $\mathcal{H}$ whose $r$-th derivative exists and is Lipschitz of order $u$:
\begin{align*}
    \mathcal{H} &= \left\{ g(\cdot):~ |g^{(r)}(s)- g^{(r)}(t)| \le C|s-t|^{u}, \text{  for } s, t \in [0,1] \right\}
\end{align*}
for some positive constant $C$, where $r$ is a non-negative integer and $u \in (0,1]$ such that $d=r+u >2$.
\end{assumption}

\begin{assumption} \label{as:2-density-x}
The distribution of $X$ is absolutely continuous with its density $h(\cdot)$ is bounded away from zero and infinity on $[0,1]^p$. 
\end{assumption}

\begin{assumption} \label{as:3-ncomp}
The number of nonzero components $q$ is fixed. 
\end{assumption}

%\begin{assumption} \label{as:4-meanzero}
%$Eg_j(X_j)=0$,  for $j=1,\ldots,q$. 
%\end{assumption}

\begin{assumption} \label{as:5-error}
The random variables $\epsilon_1, \epsilon_2, \ldots, \epsilon_n$ are independent and identically distributed with mean zero and finite variances. Moreover $\Vert \epsilon_i \Vert_{\psi_{\alpha}} < \infty$, $i=1,\ldots,n$, for  $\alpha >0$, where
\begin{align}
    \Vert \epsilon_i \Vert_{\psi_{\alpha}} &:= \inf\left\{ \eta>0~:~ \mathbb{E}[\psi_{\alpha}(|\epsilon_i|/\eta)] \le 1 \right\}, \label{eqn:orlicz}
\end{align}
with $\psi_{\alpha}(x)= \exp(x^{\alpha})-1$, for $x \ge 0$.
\end{assumption}

\begin{assumption} \label{as:6-penalty}

The maximum concavity of the penalty function $P(t;\lambda)$ (Condition 1 in Section \ref{sec:dpd}) defined as 
\begin{align}
    \kappa_0 &=  \underset{t_1 < t_2 \in (0, \infty )}{\sup} - \frac{P_{\lambda}'(t_2)-P_{\lambda}'(t_1)}{\lambda(t_2-t_1)}, \label{eqn:kappa}
\end{align}
 satisfies $\lambda \kappa_0 = o(m_n)$.

%\suneel{Do you need condition 1 separately, or can be combined to one?}
\end{assumption}

Assumptions \ref{as:1-function-class} and \ref{as:2-density-x} are standard in the additive modeling literature \cite{stone1985additive,xue2009consistent,huang2010variable,fan2011nonparametric,wang2011estimation,zhong2020forward}. While assumption \ref{as:1-function-class} is a smoothness condition for the component function, which is required for computing the spline approximation error, assumption \ref{as:2-density-x} makes sure that the  marginal densities of the random variables $X_j$ are bounded away from zero and infinity.  Assumption \ref{as:3-ncomp} is a sparsity condition, which is common in the high-dimensional regression modeling literature. For example, please see \cite{huang2010variable}. In assumption \ref{as:5-error}, we assume the error is sub-Weibull. This assumption is weaker than the sub-Gaussian assumption, which is commonly used in the literature \cite{huang2010variable,kuchibhotla2022moving}. We define the sub-Weibull random variable using the Orlicz norm in (\ref{eqn:orlicz}). Both sub-Gaussian and sub-Exponential are special cases of (\ref{eqn:orlicz}) for $\alpha=2$ and $\alpha=1$, respectively. We refer to Section 1 of  \cite{wellner2017bennett} for a historical account of the Orlicz norm and sub-Weibull random variables. Finally, assumption \ref{as:6-penalty} is a mild condition needed for the folded-concave penalty function \cite{fan2011nonconcave} for which the second derivative may not exist. For $L_1$, SCAD and MCP  penalties,  $\lambda \kappa_0$ takes values $0$, $(a-1)^{-1}$, and $a^{-1}$, respectively.

%% Facts
Similar to \cite{fan2011nonparametric}, under assumptions \ref{as:1-function-class} and \ref{as:2-density-x}, we state the following facts that are needed for our asymptotic results.
\begin{itemize}
    \item[] Fact 1. \cite{stone1985additive} There exists  positive constants $C_1$ and $C_{11}$ such that
    \begin{align}
        \mathbb{E} \Vert g_j-g_{nj} \Vert_2^2 \le C_1 m_n^{-2d} \text{ and } \Vert g_j-g_{nj}  \Vert_{\infty} \le C_{11} m_n^{-d}, \label{eqn:fact1}
    \end{align}
    where $m_n$ is defined in (\ref{eqn:mn}).
    \item[] Fact 2. (Theorem 5.4.2 of \cite{devore1993constructive}) For $s > 1$, there exists a positive constant $C_{s}$ (depends on $s$) such that
    \begin{align}
        \mathbb{E} B_{kj}^{s}(X_{ij}) \le C_{s} m_n^{-1}. \label{eqn:fact2}
    \end{align}
\end{itemize}

%%%

%%%%%%%%%%%%%%%%%%%%%%%%%%
%% Oracle
We call a model ``oracle" if the true nonzero components are known, and denote their estimators with superscript ``0" notation. In the following theorem, we provide the results for the oracle model. Without loss of generality, we assume that the first $q$ additive components $\{g_j\}_1^q$ are nonzero.  We now present the convergence rate of the oracle estimator, $\widehat{\beta}_j^0$, where $\widehat{g}^0_j=B_j^T\widehat{\beta}_j^0$, $j=1,2,\ldots,q$.

\begin{theorem}\label{thm:oracle}
Suppose the assumptions \ref{as:1-function-class} --\ref{as:5-error} hold.  If $m_n \log(qm_n)/n \rightarrow 0$, $m_n \rightarrow \infty$, as $n \rightarrow \infty$ then, for $\alpha>0$, we have
\begin{align}
    \sum_{j=1}^q \Vert \widehat{\beta}_{j}^0 - \beta_{j} \Vert^2 &= O_p\left( \frac{m_n^2 \log(q m_n) }{n}\right) + O_p\left( \frac{m_n^{3-2\alpha^*} (\log(q m_n))^{2/\alpha} }{n^{2-2\alpha^*}}\right) \nonumber \\ & \qquad  + O_p\left( \frac{m_n}{n} \right) + O\left( \frac{1}{m_n^{2d-1}} \right), \label{eqn:oracle-final}
\end{align}    
where $\alpha^*=\max\{0, \frac{\alpha-1}{\alpha}\}$.
\end{theorem}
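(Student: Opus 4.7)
The plan is to analyze the oracle MDPDE through the first-order condition obtained by restricting (\ref{est_beta}) to the active block, followed by a mean-value expansion around the best B-spline approximation coefficients. Let $\beta^{0}=(\mu,\beta_1^T,\ldots,\beta_q^T)^T$ collect the intercept together with the coefficients of the best $L_2$ approximations $g_{nj}(x)=B_j^T(x)\beta_j$ in $\mathcal G_n^0$, let $Z_i^0$ denote $Z_i$ restricted to the first $q$ blocks, and write
\[
Y_i = Z_i^{0T}\beta^{0}+\Delta_i+\epsilon_i,\qquad \Delta_i = \sum_{j=1}^q\bigl\{g_j(X_{ij})-g_{nj}(X_{ij})\bigr\},
\]
so that $|\Delta_i|\le qC_{11}m_n^{-d}$ by Fact 1 and assumption \ref{as:3-ncomp}. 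Denote the restricted score by $U_n(\beta)=n^{-1}\sum_i(Y_i-Z_i^{0T}\beta)Z_i^0 f_i^\nu(\beta)$, so that $U_n(\widehat\beta^0)=0$. A mean-value expansion around $\beta^0$ yields
\[
\bigl\|\widehat\beta^0-\beta^{0}\bigr\|^2 \le \lambda_{\min}\bigl(J_n(\bar\beta)\bigr)^{-2}\,\bigl\|U_n(\beta^{0})\bigr\|^2,
\]
with $J_n$ the Jacobian of $U_n$, so the task reduces to lower bounding $\lambda_{\min}(J_n)$ and upper bounding $\|U_n(\beta^0)\|$.

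For the Jacobian, the scale equivalence $\|B_j^T u\|_{L_2}^2\asymp m_n^{-1}\|u\|^2$ on the normalized B-spline space $\mathcal G_n^0$, combined with assumption \ref{as:2-density-x} and the uniform boundedness of the weights $f_i^\nu$ (which follows from the Gaussian target together with a side argument, based on (\ref{mdpde_sigma}), that $\widehat\sigma^2$ stays bounded away from $0$ and $\infty$ with probability tending to one), shows that the population Jacobian has smallest eigenvalue of order $m_n^{-1}$. A matrix concentration bound for $n^{-1}\sum_i f_i^\nu Z_i^0 Z_i^{0T}$ under the rate condition $m_n\log(qm_n)/n\to 0$ then gives $\lambda_{\min}(J_n(\bar\beta))\gtrsim m_n^{-1}$ uniformly on an $o_p(1)$-neighborhood of $\beta^{0}$. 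The resulting factor $\lambda_{\min}(J_n)^{-2}\asymp m_n^2$ is the inflation that converts every score-level rate into the coefficient-norm rates appearing in (\ref{eqn:oracle-final}). Splitting $U_n(\beta^0)=U_n^{(\epsilon)}+U_n^{(\Delta)}$, the bias piece $U_n^{(\Delta)}=n^{-1}\sum_i\Delta_i f_i^\nu Z_i^0$ is direct: $|\Delta_i f_i^\nu|=O(m_n^{-d})$ and $\mathbb E|B_{kj}(X_{ij})|=O(m_n^{-1})$ by Fact 2 give $\|U_n^{(\Delta)}\|^2=O_p(qm_n^{-(2d+1)})$, and after multiplication by $m_n^2$ this produces the deterministic term $O(m_n^{-(2d-1)})$.

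The main obstacle is the stochastic piece $U_n^{(\epsilon)}=n^{-1}\sum_i\epsilon_i f_i^\nu(\beta^0)Z_i^0$: each of its $qm_n$ coordinates is a weighted sum $n^{-1}\sum_i\epsilon_i W_{ikj}$ with uniformly bounded multipliers $W_{ikj}=f_i^\nu B_{kj}(X_{ij})$ satisfying $\mathbb E W_{ikj}^2=O(m_n^{-1})$. Here assumption \ref{as:5-error} enters in earnest through a Bernstein-type inequality for sub-Weibull variables of order $\alpha$ (via the Orlicz-norm machinery; cf.\ \cite{kuchibhotla2022moving,wellner2017bennett}), which gives a two-term tail whose ``Gaussian'' part behaves like $\sqrt{t/(nm_n)}$ and whose ``heavy-tail'' part behaves like $n^{-1}t^{1/\alpha}(\log n)^{1/\alpha}$. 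Taking $t\asymp\log(qm_n)$ via a union bound over the $qm_n$ coordinates, squaring, summing, and multiplying by $m_n^2$ produces exactly the first two terms, $m_n^2\log(qm_n)/n$ and $m_n^{3-2\alpha^*}(\log(qm_n))^{2/\alpha}/n^{2-2\alpha^*}$, with $\alpha^*=\max\{0,(\alpha-1)/\alpha\}$ marking the transition between the sub-Gaussian ($\alpha\ge 1$) and heavy-tailed ($\alpha<1$) regimes. The residual $O_p(m_n/n)$ contribution is absorbed from the Taylor remainder in the mean-value expansion, which involves the derivative of $f_i^\nu$ with respect to $\beta$ and the empirical Hessian fluctuations. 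The most delicate bookkeeping is in the sub-Weibull concentration step: one must carefully separate the two tail regimes and track the interaction between the Orlicz norm of $\epsilon_i$ and the spline scale $m_n^{-1}$ so that the exponent $\alpha^*$ appears correctly in the final rate.
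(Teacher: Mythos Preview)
Your overall route---mean-value expansion of the restricted score, splitting into bias and noise pieces, sub-Weibull concentration for the noise via a Bernstein-type inequality with a union bound over the $qm_n$ coordinates, and a Gram-matrix eigenvalue bound for the Jacobian---is exactly the paper's. Two concrete points, however, are mis-identified.

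First, the Jacobian $J_n(\beta)=-\partial_\beta U_n(\beta)$ is \emph{not} the weighted Gram matrix with weights $f_i^\nu$: differentiating $f_i^\nu(\beta)$ contributes a second term, and the actual weights are $w_i=f_i^\nu\bigl[1-\nu(Y_i-Z_i^{0T}\beta)^2/\sigma^2\bigr]$, which become \emph{negative} whenever the squared residual exceeds $\sigma^2/\nu$. Uniform boundedness of $f_i^\nu$ alone therefore does not deliver $\lambda_{\min}(J_n)\gtrsim m_n^{-1}$; one needs an argument that these signed weights are, in aggregate, bounded below (the paper does this around its display (\ref{eqn:2d-wts-positive}) before invoking the B-spline Gram eigenvalue lemma). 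Treating the $\partial_\beta f_i^\nu$ contribution as a ``Taylor remainder'' rather than as part of the leading Jacobian is the source of this slip.

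Second, the term $O_p(m_n/n)$ in (\ref{eqn:oracle-final}) does not come from any higher-order remainder---the mean-value expansion is exact at first order, so there is none. It arises instead from the \emph{intercept} coordinate of $U_n^{(\epsilon)}$: the paper rescales the intercept entry of $Z_i^0$ to $1/\sqrt{m_n}$, so that coordinate contributes $\bigl((n\sqrt{m_n})^{-1}\sum_i\epsilon_i f_i^\nu\bigr)^2=O_p((nm_n)^{-1})$ to $\|U_n^{(\epsilon)}\|^2$, and after the $m_n^2$ inflation from $\lambda_{\min}(J_n)^{-2}$ this becomes $O_p(m_n/n)$. Your description of the stochastic piece never singles out the intercept, so this term has no home in your accounting.
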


The rate of convergence in Theorem \ref{thm:oracle} is determined by four terms: the first two terms contribute to the stochastic error in estimating the nonparametric components, the third term is for estimating intercept $\mu$, and the fourth term is for the spline approximation error. When the error ($\epsilon$) is sub-Gaussian $(\alpha=2, \alpha^*=1/2)$, the second term in (\ref{eqn:oracle-final}) coincides with the first term, and it is the same convergence rate obtained in Theorem \ref{thm:oracle} of \cite{huang2010variable}, where the authors' used sub-Gaussian assumption.

In the following theorem, we provide the convergence rate of the penalized estimator $\widehat{\beta}_j$, which yields the estimated additive function $\widehat{g}_j=B_j^T\widehat{\beta}_j$, $j=1,2,\ldots,p$.
%% Theorem high-dim
\begin{theorem} \label{thm:consistency}
Suppose the assumptions \ref{as:1-function-class} --\ref{as:6-penalty} hold. If $m_n \log(pm_n)/n \rightarrow 0$, $m_n \rightarrow \infty$ and $\lambda^2m_n \rightarrow 0$ as $n \rightarrow \infty$, then there exists a strict local minimizer $\widehat{\beta}=(\widehat{\beta}^{(1)^T}, \widehat{\beta}^{(2)^T} )^T$, with $\widehat{\beta}^{(1)}=(\widehat{\mu},\widehat{\beta}_1^T, \ldots ,\widehat{\beta}_q^T)^T$ and $\widehat{\beta}^{(2)} \in \mathbb{R}^{(p-q)m_n}$, of the penalized DPD loss $L_{\alpha}$ in (\ref{eqn:plike}) such that $\widehat{\beta}^{(2)}=0$ with probability tending to 1 as $n \rightarrow \infty$. Further, for $\alpha>0$, we have
\begin{align}
    \sum_{j=1}^q \Vert \widehat{\beta}_{j} - \beta_{j} \Vert^2 &= O_p\left( \frac{m_n^2 \log(p m_n)}{n}\right)  + O_p\left( \frac{m_n^{3-2\alpha^*} (\log(p m_n))^{2/\alpha} }{n^{2-2\alpha^*}}\right)  \nonumber \\ & \qquad + O_p\left( \frac{m_n}{n} \right) + O\left( \frac{1}{m_n^{2d-1}}  \right)+ O(q \lambda^2), \label{eqn:consistent-final}
\end{align}
where $\alpha^*=\max\{0, \frac{\alpha-1}{\alpha}\}$.
\end{theorem}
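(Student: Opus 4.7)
\textbf{Proof plan for Theorem~\ref{thm:consistency}.} The plan is to bootstrap from Theorem~\ref{thm:oracle} by showing that the oracle estimator, padded with zeros on the inactive groups, satisfies the standard Fan--Li--Lv sufficient conditions for a strict local minimizer of the penalized DPD objective with probability tending to one. Let $\widehat\beta^{(1)}=(\widehat\mu,\widehat\beta_1^{0\,T},\dots,\widehat\beta_q^{0\,T})^T$ denote the oracle estimator from Theorem~\ref{thm:oracle}, and consider the candidate $\widehat\beta=((\widehat\beta^{(1)})^T,0^T)^T$. I will verify three conditions: (a) the penalized stationarity equation on the active block is solved; (b) for each $j>q$, the inactive subgradient inequality $\|\nabla_{\beta_j} L_\nu^{(0)}(\widehat\beta)\|_2<\lambda\,P'(0+;\lambda)$ holds, where $L_\nu^{(0)}$ is the unpenalized part of~(\ref{eqn:plike}); (c) the active-block Hessian of $L_\nu^{(0)}$ minus $\lambda\kappa_0$ times the identity is positive definite. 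Conditions (a)--(c) together imply existence of a strict local minimizer with the correct zero pattern.

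For (a) and the rate, I would perturb the oracle first-order condition by the penalty subgradient $u_\lambda=\bigl(P_\lambda'(\|\widehat\beta_j\|_2)\,\widehat\beta_j/\|\widehat\beta_j\|_2\bigr)_{j=1}^q$. Taylor-expanding the DPD score around the best spline approximation $\beta^{*(1)}$, the difference between the oracle and penalized active-block estimators is, to leading order, the inverse active-block Hessian applied to $u_\lambda$. Under Assumption~\ref{as:1-function-class} together with Fact~1, each component norm $\|\beta_j^{*}\|_2$ is bounded below by a positive constant; combined with the oracle rate this guarantees $\min_{j\le q}\|\widehat\beta_j^{0}\|_2$ is bounded below with probability tending to one, so $|P_\lambda'(\|\widehat\beta_j^{0}\|_2)|$ is at most of order $\lambda$, and $\|u_\lambda\|_2^2=O(q\lambda^2)$. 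This accounts for the extra summand in~(\ref{eqn:consistent-final}); the remaining four summands are inherited from Theorem~\ref{thm:oracle}, with $\log(qm_n)$ upgraded to $\log(pm_n)$ because the empirical-process union bound in step~(b) now ranges over $p$ groups.

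To verify (b) I would bound, uniformly in $j>q$,
\[
\Bigl\|\frac{1}{n}\sum_{i=1}^n (Y_i-Z_i^{(1)T}\widehat\beta^{(1)})\,f_i^\nu\,B_j(X_{ij})\Bigr\|_2,
\]
decomposing the residual into the sub-Weibull error $\epsilon_i$, the spline approximation bias (controlled by Fact~1 as $O(m_n^{-d})$), and the estimation error $Z_i^{(1)T}(\beta^{*(1)}-\widehat\beta^{(1)})$. The stochastic part is handled by a maximal inequality for sub-Weibull sums (of Kuchibhotla--Chakrabortty type) across the $(p-q)m_n$ inactive coordinates, yielding a bound of order $\sqrt{m_n\log(pm_n)/n}$ plus the tail correction $m_n^{(3-2\alpha^{*})/2}(\log(pm_n))^{1/\alpha}/n^{1-\alpha^{*}}$; the deterministic pieces are handled via Fact~2 and boundedness of the B-spline basis. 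The hypothesis $\lambda^2 m_n\to 0$, combined with the assumption that $\lambda$ dominates the noise rate (implicit in the theorem's tuning regime and in Condition~1), ensures that $\lambda P'(0+;\lambda)$ strictly dominates these rates, forcing $\widehat\beta^{(2)}=0$ with probability tending to one.

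Step (c) reduces to showing that the smallest eigenvalue of $n^{-1}Z^{(1)T}W_\nu Z^{(1)}$ is bounded below by a positive constant multiple of $m_n^{-1}$ with probability tending to one. This follows from the standard B-spline Gram-matrix bound (equivalence of $L^2$ and $\ell^2$ norms of B-spline coefficients) together with Assumption~\ref{as:2-density-x}, once the DPD weights $f_i^\nu$ are shown to be uniformly bounded below on a compact neighbourhood of the true parameter, which holds by continuity and the oracle consistency from Theorem~\ref{thm:oracle}. Assumption~\ref{as:6-penalty} then guarantees that subtracting $\lambda\kappa_0$ does not overwhelm this lower bound. The principal obstacle is the uniform control in step~(b): the sub-Weibull tails only provide stretched-exponential concentration, so the maximal inequality inevitably incurs the extra $(\log(pm_n))^{2/\alpha}/n^{2-2\alpha^{*}}$ factor appearing in~(\ref{eqn:consistent-final}), and reconciling this with the requirement $\lambda^2 m_n=o(1)$ while keeping $\lambda$ large enough to enforce sparsity is the delicate balancing act of the argument.
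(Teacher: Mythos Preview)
Your two-step scaffold---constrained consistency on the active block, then a subgradient/KKT check on the inactive block---matches the paper's. The paper, however, does not bootstrap from Theorem~\ref{thm:oracle} in Step~1: it works directly with the \emph{penalized} constrained objective $\overline{L}_\nu(\delta)=-\ell_n(\delta)-\sum_{j\le q}P_\lambda(\|\delta_j\|_2)$ and runs a boundary argument around the true $\beta^{(1)}$, so the penalty appears explicitly in both the gradient $V$ (the term $\|\overline{P}_\lambda(\beta^{(1)})\|^2=O(q\lambda^2 m_n^{-2})$ is what produces the extra $O(q\lambda^2)$ after rescaling by the $m_n^{-1}$ Hessian eigenvalue) and in the Hessian $D$ (the $-\lambda\kappa_0 m_n^{-2}$ correction to $\lambda_{\min}(D)$, absorbed via Assumption~\ref{as:6-penalty}). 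Your perturbation-from-oracle route is a legitimate variant that reaches the same rate, but note that your opening candidate, the padded oracle estimator, does \emph{not} solve the penalized stationarity equation~(a); you end up constructing the penalized constrained minimizer only implicitly through the perturbation, which is less clean than the paper's direct existence argument.

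The substantive gap is in your step~(c). The claim that the DPD weights $f_i^\nu$ are ``uniformly bounded below on a compact neighbourhood of the true parameter, which holds by continuity'' is false: $f_i^\nu\propto\exp\bigl(-\nu(Y_i-Z_i^T\beta)^2/(2\sigma^2)\bigr)$ depends on the random residual, and since the $\epsilon_i$ are sub-Weibull and hence unbounded, $f_i^\nu$ can be arbitrarily close to zero for any $\beta$---continuity in $\beta$ gives you nothing against randomness in $Y_i$. The paper does not try to lower-bound $f_i^\nu$. Instead it targets the full Hessian weight $f_i^\nu\bigl[1-\nu(Y_i-Z_i^T\bar\beta)^2/\sigma^2\bigr]$ and derives the two-sided inequality~(\ref{eqn:2d-wts-positive}), placing it in $\bigl[0,(\sigma\sqrt{2\pi})^{-1}\bigr]$, then invokes Lemma~3 of \cite{huang2010variable} for the weighted Gram matrix. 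Your argument for~(c) needs to be rewritten to control the Hessian weight directly rather than the raw $f_i^\nu$.
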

The rate of convergence in (\ref{eqn:consistent-final}) has an additional term due to penalization than the rate in (\ref{eqn:oracle-final}). It is worth mentioning that the rate for the penalization term $q \lambda^2$ is different from the rate for the same $4 m_n^2\lambda^2/n^2$ in Theorem \ref{thm:oracle} of \cite{huang2010variable}. It is due to the normalization constant ($n$) in the penalized likelihood expression (\ref{eqn:plike}), which is not present in the penalized objective function of \cite{huang2010variable}. Theorem \ref{thm:consistency} only proves the existence of a local minimizer of the objective function (\ref{eqn:plike}), which is typical in the literature with nonconcave penalties, for example, Theorem \ref{thm:consistency} of \cite{xue2009consistent}.

The following corollary is a direct consequence of Theorem \ref{thm:consistency} using the properties of B-splines. Here, the estimated additive function $\widehat{g}_j=B_j^T\widehat{\beta}_j$, $j=1,\ldots,q$.
\begin{corollary}
Suppose the conditions in Theorem \ref{thm:consistency} hold. Then, for $\alpha>0$, we have
\begin{align*}
    \sum_{j=1}^q \Vert \widehat{g}_{j} - g_{j} \Vert^2 &= O_p\left( \frac{m_n \log(p m_n)}{n}\right)   + O_p\left( \frac{m_n^{2-2\alpha^*} (\log(p m_n))^{2/\alpha} }{n^{2-2\alpha^*}}\right) \\ & \qquad + O_p\left( \frac{1}{n} \right) + O\left( \frac{1}{m_n^{2d}}  \right)+ O(q \lambda^2m_n^{-1}),
\end{align*}
where $\alpha^*=\max\{0, \frac{\alpha-1}{\alpha}\}$.
\end{corollary}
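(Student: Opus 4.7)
The plan is to reduce the $L^2$ function-norm bound on $\widehat{g}_j - g_j$ to the coefficient-norm bound already supplied by Theorem \ref{thm:consistency}, plus the spline approximation error from Fact 1. First, I would decompose
\[
\widehat{g}_j - g_j \;=\; B_j^\top(\widehat{\beta}_j - \beta_j) \;+\; (g_{nj} - g_j),
\]
where $g_{nj}(\cdot)=B_j^\top \beta_j$ is the normalized B-spline approximation to $g_j$ defined in \eqref{eqn:norm-splinebasis}. By the inequality $(a+b)^2 \le 2a^2 + 2b^2$ applied in $L^2[0,1]$,
\[
\sum_{j=1}^q \|\widehat{g}_j - g_j\|^2 \;\le\; 2\sum_{j=1}^q \|B_j^\top(\widehat{\beta}_j - \beta_j)\|^2 \;+\; 2\sum_{j=1}^q \|g_{nj} - g_j\|^2.
\]

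For the approximation term, I would invoke Fact 1 directly: $\mathbb{E}\|g_{nj}-g_j\|^2 \le C_1 m_n^{-2d}$, and since $q$ is fixed by assumption \ref{as:3-ncomp}, the second sum is $O_p(m_n^{-2d})$, matching the $O(m_n^{-2d})$ contribution in the target bound. For the estimation term, note that
\[
\|B_j^\top(\widehat{\beta}_j - \beta_j)\|^2 \;=\; (\widehat{\beta}_j - \beta_j)^\top D_j (\widehat{\beta}_j - \beta_j),
\]
with $D_j = \int_0^1 B_j(t)B_j^\top(t)\,dt$ as defined in Section \ref{sec:method}. The key structural fact I would use is the standard B-spline Gram-matrix eigenvalue estimate (see, e.g., Theorem 5.4.2 of \cite{devore1993constructive} together with the equal-spacing of the knot sequence): there exists a constant $C>0$, depending only on $r$, such that $\lambda_{\max}(D_j) \le C\,m_n^{-1}$ uniformly in $j$. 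This is the only substantive external ingredient and is really the pivotal step; everything else is routine algebra.

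Applying this eigenvalue bound,
\[
\sum_{j=1}^q \|B_j^\top(\widehat{\beta}_j - \beta_j)\|^2 \;\le\; C\, m_n^{-1} \sum_{j=1}^q \|\widehat{\beta}_j - \beta_j\|^2.
\]
I would then substitute the bound on $\sum_{j=1}^q \|\widehat{\beta}_j - \beta_j\|^2$ from Theorem \ref{thm:consistency} and distribute the factor $m_n^{-1}$ across each of the five terms. The first stochastic term becomes $O_p(m_n\log(pm_n)/n)$, the second becomes $O_p(m_n^{2-2\alpha^*}(\log(pm_n))^{2/\alpha}/n^{2-2\alpha^*})$, the intercept term becomes $O_p(1/n)$, the approximation term from Theorem \ref{thm:consistency} becomes $O(m_n^{-2d})$ (which absorbs the $2\sum_j\|g_{nj}-g_j\|^2$ piece computed above), and the penalization term becomes $O(q\lambda^2 m_n^{-1})$. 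Combining these yields exactly the displayed bound in the corollary.

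The main (mild) obstacle is simply verifying that the eigenvalue bound $\lambda_{\max}(D_j)\le C m_n^{-1}$ continues to hold after the centering used to form the $B_{kj}$'s from the raw $b_k$'s; this follows because the centering is a rank-one perturbation whose $L^2$ effect is $O(m_n^{-1})$ as well, so the upper-bound constant is only inflated by a fixed factor. Everything else reduces to linearity of $\sum_{j=1}^q$ over a fixed number of terms.
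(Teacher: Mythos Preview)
Your proposal is correct and is exactly the argument the paper intends: the authors do not give a separate proof but simply state that the corollary is ``a direct consequence of Theorem~\ref{thm:consistency} using the properties of B-splines,'' which is precisely your decomposition plus the eigenvalue bound $\lambda_{\max}(D_j)=O(m_n^{-1})$ (the paper itself invokes this fact in the proof of Theorem~\ref{thm:consistency}, noting that ``the matrix $D_j$ is positive definite, and its eigenvalues are of order $m_n^{-1}$'').
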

Let's introduce an additional notation. For any two sequences $\{a_n, b_n, n=1,2,\ldots\}$, we denote $a_n \asymp b_n$ if there exists come constants $0< c_1 < c_2 < \infty$ such that $c_1 \le a_n/b_n \le c_2$ for all sufficiently large $n$. We now state a corollary useful for providing an upper bound on the number of zero component functions, which depends on the sub-Weibull parameter $\alpha$.
\begin{corollary} \label{cor:pbound}
    Suppose the conditions in Theorem \ref{thm:consistency} hold. In addition, assume that $m_n \asymp n^{1/(2d+1)}$ and $\lambda \asymp m_n^{3/2-\alpha^*} (\log(pm_n))^{1/\alpha}/n^{1-\alpha^*}$  for $\alpha > 0$ and $\alpha^*=\max\{0, (\alpha-1)/\alpha\}$. Then we have
    \begin{align*}
    \sum_{j=1}^q \Vert \widehat{g}_{j} - g_{j} \Vert^2 &= O_p\left( n^{-2d/(2d+1)} \log(p m_n)\right)  \\ & \qquad \qquad + O_p\left( n^{-4d(1-\alpha^*)/(2d+1)} (\log(p m_n))^{2/\alpha} \right).
\end{align*}
\end{corollary}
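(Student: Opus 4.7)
The plan is that Corollary~\ref{cor:pbound} follows from the immediately preceding corollary by direct algebraic substitution, so my proof would be a short chain of rate calculations together with a sanity check that the prescribed tuning is admissible. Write $a_n = n^{-2d/(2d+1)}\log(pm_n)$ and $b_n = n^{-4d(1-\alpha^*)/(2d+1)}(\log(pm_n))^{2/\alpha}$ for the two target rates.

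First I would verify that the choices $m_n \asymp n^{1/(2d+1)}$ and $\lambda \asymp m_n^{3/2-\alpha^*}(\log(pm_n))^{1/\alpha}/n^{1-\alpha^*}$ satisfy the hypotheses of Theorem~\ref{thm:consistency}. The condition $m_n\log(pm_n)/n \to 0$ becomes $a_n \to 0$, which is immediate since $2d/(2d+1) \in (0,1)$. The condition $\lambda^2 m_n \to 0$ becomes $m_n^{4-2\alpha^*}(\log(pm_n))^{2/\alpha}/n^{2-2\alpha^*} \to 0$, whose exponent on $n$ equals $(4-2\alpha^*)/(2d+1) - (2-2\alpha^*) = -2\bigl(2d(1-\alpha^*)-1\bigr)/(2d+1)$. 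This is negative as soon as $d(1-\alpha^*) > 1/2$, which is granted by assumption~\ref{as:1-function-class} ($d > 2$) together with $\alpha^* \in [0,1)$.

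Next I would plug the prescribed rates into each of the five summands of the preceding corollary's bound. The term $m_n\log(pm_n)/n$ equals $a_n$. The term $m_n^{2-2\alpha^*}(\log(pm_n))^{2/\alpha}/n^{2-2\alpha^*}$ simplifies to $n^{(2-2\alpha^*)(1/(2d+1)-1)}(\log(pm_n))^{2/\alpha} = b_n$. The residual $1/n$ is $o(a_n)$ because $a_n$ carries an extra logarithmic factor while $n^{-1}$ decays strictly faster than $n^{-2d/(2d+1)}$. The spline-bias term $1/m_n^{2d}$ equals $n^{-2d/(2d+1)}$, which is dominated by $a_n$. Finally, using $\lambda^2/m_n \asymp m_n^{2-2\alpha^*}(\log(pm_n))^{2/\alpha}/n^{2-2\alpha^*}$ and the sparsity condition~\ref{as:3-ncomp} ($q$ fixed), the penalization term $q\lambda^2/m_n$ has the same order as $b_n$. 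Combining yields the claimed $O_p(a_n) + O_p(b_n)$ bound.

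There is essentially no obstacle here beyond bookkeeping; the one step that requires any real thought is the admissibility check $\lambda^2 m_n \to 0$, since it couples the sub-Weibull parameter $\alpha$ with the smoothness index $d$, and everything else is a mechanical substitution of exponents.
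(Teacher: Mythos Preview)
Your approach is correct and matches the paper's implicit treatment: the paper states Corollary~\ref{cor:pbound} without proof, clearly intending it to follow from the preceding corollary by exactly the substitution of $m_n \asymp n^{1/(2d+1)}$ and the prescribed $\lambda$ that you carry out, and your term-by-term bookkeeping is accurate.

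One small correction in your admissibility check: the implication ``$d>2$ and $\alpha^*\in[0,1)$ together give $d(1-\alpha^*)>1/2$'' is not true in general, since for $\alpha\ge 1$ one has $1-\alpha^*=1/\alpha$ and the condition becomes $\alpha<2d$. This does not undermine your argument, however, because the corollary already assumes the conditions of Theorem~\ref{thm:consistency} hold, and $\lambda^2 m_n\to 0$ is among those conditions; your check is a consistency verification rather than a needed hypothesis.
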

From Corollary \ref{cor:pbound}, we can identify that the number of zero components can be as large as  $\exp\left( o(n^{2d\alpha(1-\alpha^*)/(2d+1)}) \right)$. For sub-Gaussian error $(\alpha=2)$ this equals to $\exp(o(n^{2d/(2d+1)}))$. For example, if we assume that each function $g_j(\cdot)$ has a continuous second derivative $(d=2)$, the number of zero components can be as large as $\exp(o(n^{4\alpha(1-\alpha^*)/5}))$. It implies the number can be as large as $\exp(o(n^{4/5}))$ for $\alpha \ge 1$ and as large as $\exp(o(n^{4\alpha/5}))$ for $\alpha < 1$, which is of a smaller order. We need to pay this price for choosing an exponentially heavy-tailed error distribution.

%%%
\subsection{Influence Function}

Let $(X, Y) \sim H_0$ where $H_0$ is a model distribution. Let $\beta(\cdot)$ be a functional  defined as
\begin{align*}
    \beta(H_0) &= \underset{\beta,\sigma}{\arg \min}~\mathbb{E}_{H_0}[V(\beta, \sigma^2; \lambda, \nu)] + \mathcal{P}_{\lambda}, 
\end{align*}
where $V$ and $\mathcal{P}_{\lambda}$ are defined similar to  $V_i$ and $\mathcal{P}_{\lambda}$ in (\ref{eqn:plike}) for the random variables $(X,Y)$. It is a population version of the  loss function defined in (\ref{eqn:plike}). Introduced by Hampel \cite{hampel1974influence}, the influence function measures the effect of small contamination on the functional \cite{ollerer2015influence}. Let $\delta_{(X_0,Y_0)}$ denotes the probability mass function that assigns mass 1 to the observation $(X_0,Y_0)$. Let  $H_{\varepsilon}=(1-\varepsilon)H_0 + \varepsilon \delta_{(X_0, Y_0)}$ be a contaminated distribution for  $0 \le \varepsilon \le 1/2$.  The influence function of the functional $\beta(\cdot)$ is defined as
\begin{align*}
    \text{IF}((X_0,Y_0); \beta, H_0) &= \frac{\partial }{\partial \varepsilon} \left[ \beta\left((1-\varepsilon)H_0 + \varepsilon \delta_{(X_0, Y_0)} \right)\right]\vert_{\varepsilon=0},
\end{align*}
where $(X_0,Y_0)$ is the pointwise contamination. We proceed similar to \cite{ollerer2015influence} and derive the influence function for a fixed value of $\sigma$.  Consider the first order condition of the functional at the contaminated model $H_{\varepsilon}$, 
\begin{align*}
    0 &= -\frac{1+\nu}{\sigma^2} \mathbb{E}_{H_{\varepsilon}} \left[ f^{\nu}(\beta; X, Y) (Y-Z^T\beta(H_{\varepsilon}))Z \right] + \partial\mathcal{P}_{\lambda}(\beta(H_{\varepsilon})),
\end{align*}
where $\partial\mathcal{P}_{\lambda}(\beta(H_{\varepsilon}))$ is a vector of subdifferential of $\mathcal{P}_{\lambda}(\beta(H_{\varepsilon}))$ with respect to $\beta(H_{\varepsilon})$.
It follows that
\begin{align*}
    - & \frac{1+\nu}{\sigma^2} (1-\varepsilon) \mathbb{E}_{H_{\varepsilon}} \left[ f^{\nu}(\beta(H_{\varepsilon}); X,Y) (Y-Z^T\beta(H_{\varepsilon}))Z \right] \nonumber \\  \qquad &-   \frac{1+\nu}{\sigma^2}\varepsilon  f^{\nu}(\beta(H_{\varepsilon}); X_0,Y_0) (Y_0-Z_0^T\beta(H_{\varepsilon}))Z_0 + \partial\mathcal{P}_{\lambda}(\beta(H_{\varepsilon}))=0.  
\end{align*}
Differentiating the above with respect to $\varepsilon$, we obtain
\begin{align*}
     \text{IF}((X_0,Y_0); \beta, H_0) &= \frac{1+\nu}{\sigma^2}S^{-1} \bigg[ \mathbb{E}_{H_0}\left[ f^{\nu}(\beta(H_{0}); X,Y) (Y-Z^T\beta(H_{0}))Z \right] \\ & \qquad - f^{\nu}(\beta(H_{0}); X_0,Y_0) (Y_0-Z_0^T\beta(H_{0}))Z_0  \bigg],
\end{align*}
where 
\begin{align*}
    S &= \frac{1+\nu}{\sigma^2} \bigg[ \mathbb{E}_{H_0}[f^{\nu}(\beta(H_{0}); X,Y) (Y-Z^T\beta(H_{0})) ~ (-ZZ^T) ] \\ &\qquad + \mathbb{E}_{H_0}\left[ \frac{\nu}{\sigma^2} f^{\nu}(\beta(H_{0}); X,Y) (Y-Z^T\beta(H_{0}))^2 ZZ^T \right] \bigg]- \partial^2\mathcal{P}_{\lambda}(\beta(H_{0})),
\end{align*}
with,
\begin{align*}
   & \partial^2\mathcal{P}_{\lambda}(\beta(H_{0})) = \text{diag}\bigg\{ \frac{P'_{\lambda}(\Vert \beta_j(H_{0}) \Vert_2)}{\Vert \beta_j(H_{0}) \Vert_2} D_j \\ & \qquad + \left[ \frac{P^{''}_{\lambda}(\Vert \beta_j(H_{0}) \Vert_2)}{\Vert \beta_j(H_{0}) \Vert_2^2} -\frac{P'_{\lambda}(\Vert \beta_j(H_{0}) \Vert_2)}{\Vert \beta_j(H_{0}) \Vert_2^3} \right] D_j \beta_j(H_{0}) \beta_j^{T}(H_{0}) D_j \bigg\},
\end{align*}
for $\Vert \beta_j(H_{0}) \Vert_2 \neq 0$. It can be shown that the expectations in $S$ and the influence functions are bounded under mild conditions. Then, the influence function is bounded in both $X_0$ and $Y_0$ since the function $xe^{-x^2}$ is bounded for $x \in \mathbb{R}$. Therefore, the DPD estimator $\hat{ \beta}$ is robust to outliers. By proceeding similarly, we can  show that the influence function for the functional $\sigma^2$ is  bounded. We refer to \cite{ghosh2013robust} for more details.

\section{Numerical Results} \label{sec:simulations}
We conducted an extensive simulation study to evaluate the performance of our proposed methodology. For comparison, we include the estimators from the following methods: ordinary least squares (OLS), LASSO, LAD LASSO \cite{wang2007robust}, GAM \cite{mgcvbook}, GAMSEL \cite{chouldechova2015generalized}, and RPLAM \cite{boente2023robust}. RPLAM did not converge in several setups, so we have excluded those cases. Except for OLS and GAM, the remaining approaches produce sparse coefficients. The programming language \texttt{R} \cite{Rlanguage} is used to conduct the simulations. 

We generate data from the additive model \eqref{eqn:model-def}. We consider the sample size, $n=250$, and the number of predictors, $p$, as 15 and 20. First, the covariates $X_j$'s are generated from a $p$-variate normal distribution $N_p(0, \Sigma)$, where the $(i, j)$-th element of $\Sigma$ is computed as $\sigma_{i, j} = (0.5)^{|i-j|}$. Each covariate is now transformed back to the range $(0,1)$ using the cumulative distribution function of the standard normal distribution. The covariates still exhibit correlation but with different magnitudes. %This  transforms all covariates to have a uniform distribution although they are correlated with each other.  

The true model contains eight nonzero additive functions, as given below:
\begin{align*}
    g_{j_1} (x) &= a_{j_1} \sin(2\pi x), \ \ g_{j_2} (x) = a_{j_2} \sin(2\pi x),\\
    g_{j_3} (x) &= a_{j_3} \cos(2\pi x), \ \ g_{j_4} (x) = a_{j_4} \cos(2\pi x),\\
    g_{j_5} (x) &= a_{j_5} \exp( x), \ \ g_{j_6} (x) = a_{j_6} \exp( x),\\
    g_{j_7} (x) &= a_{j_7}  x, \ \ g_{j_8} (x) = a_{j_8}  x.
\end{align*}
The remaining $(p-8)$ additive components are set to zero. Here $j_1, j_2, \cdots, j_8$ is a random permutation of $(1, 2, \ldots, p)$, which means that the nonzero functions are placed randomly over the components $X_j$. The coefficients $a_{j_1}, a_{j_2}, \cdots, a_{j_8}$ are also random -- half of them are generated from $U(1, 2)$, and the remaining half are from $U(-2, -1)$. We deliberately avoided these coefficients from the neighborhood of zero to make sure the corresponding additive functions were far from a zero function. 

We consider five $(m_n=5)$ B-spline basis functions for each covariate $X_j$ using the \texttt{R} package \texttt{fda} \cite{fdapackage}. We use the group SCAD penalty, where the regularization parameter ($\lambda_n$) is selected using the AIC, BIC, EBIC, and Cp statistics. After selecting the regularization parameter based on different information criteria, we obtain the optimum DPD parameter $\nu$ that minimizes the H-score in Equation \eqref{h_score}. Thus, we have four estimators with the optimum $\lambda_n$ and $\nu$, which are denoted as AIC(DPD), BIC(DPD), EBIC(DPD), and Cp(DPD).

\begin{table*}[t]
\setlength{\tabcolsep}{1.5pt}
\caption{The average RPE, MSE of $\widehat{\sigma}$, sensitivity, and specificity of different estimators in pure data with $\epsilon \sim N(0,\sigma^2)$ and contaminated data with 5\% contamination from $\epsilon_c \sim N(8, \sigma^2)$, where $\sigma=1$ (SNR=$5.2763$). The second line gives the corresponding standard deviations (SD).  The number of additive components, $p$, is 15, with 7 set to zero. }
\label{tab:case1}
\centering
\scriptsize
\begin{tabularx}{\textwidth}{lcccccccc}
 \toprule
\multirow{2}{*}{\vspace{-6pt}\textbf{Estimators}} & \multicolumn{4}{c}{\textbf{Pure Data}}  & \multicolumn{4}{c}{\textbf{Contaminated Data}} \\ 
\cmidrule{2-9}
 & \textbf{RPE} & \textbf{MSE($\widehat{\sigma})$} & \textbf{Sensitivity} & \textbf{Specificity} & \textbf{RPE} & \textbf{MSE($\widehat{\sigma})$} & \textbf{Sensitivity} & \textbf{Specificity}\\
  \midrule
  % OLS & 4.0458 & 0.7649 & -- & -- & 4.0791 & 2.2625 & -- & -- \\ 
  % LASSO & 4.0983 & 0.7669 & 0.8662 & 0.4686 & 3.9906 & 2.2609 & 0.8562 & 0.4386 \\ 
  % LAD LASSO & 6.3820 & 1.6690 & 0.3850 & 0.8529 & 5.3164 & 3.2225 & 0.3538 & \textbf{0.9714} \\ 
  % GAM & 1.4983 & 0.0220 & -- & -- & 3.3377 & 0.5646 & -- & -- \\ 
  % GAMSEL & 1.2660 & \textbf{0.0053} & \textbf{1.0000} & 0.1257 & 2.0432 & 0.8613 & \textbf{1.0000} & 0.1400 \\ 
  % %RPLAM & 31826.4557 & -- & -- & -- & 43994.0917 & -- & -- & -- \\ 
  % AIC(DPD) & 1.2529 & 0.0064 & 1.0000 & 0.8118 & 1.5512 & 0.0127 & 0.9950 & 0.7962 \\ 
  % BIC(DPD) & 1.2854 & 0.0062 & 0.9802 & 0.9161 & \textbf{1.5302} & 0.0126 & 0.9933 & 0.9200 \\ 
  % EBIC(DPD) & 1.3871 & 0.0088 & 0.9345 & \textbf{0.9705} & 1.5306 & 0.0133 & 0.9917 & 0.9448 \\ 
  % CP(DPD) & \textbf{1.2529} & 0.0061 & 1.0000 & 0.8322 & 1.5368 & \textbf{0.0116} & 0.9950 & 0.8152 \\ 
    OLS & 4.0458 & 0.7649 & -- & -- & 4.0791 & 2.2625 & -- & -- \\ 
  \bf{(SD)} & 0.0771 & 0.1003 & -- & -- & 0.2841 & 0.6234 & -- & -- \\ 
  LASSO & 4.0983 & 0.7669 & 0.8662 & 0.4686 & 3.9906 & 2.2609 & 0.8562 & 0.4386 \\ 
  \bf{(SD)} & 0.0854 & 0.1021 & 0.0857 & 0.2814 & 0.2560 & 0.6222 & 0.1026 & 0.2538 \\ 
  LAD LASSO & 6.3820 & 1.6690 & 0.3850 & 0.8529 & 5.3164 & 3.2225 & 0.3538 & \bf{0.9714} \\ 
  \bf{(SD)} & 0.2297 & 0.1768 & 0.0553 & 0.0554 & 0.2440 & 0.7458 & 0.0834 & 0.0673 \\ 
  GAM & 1.4983 & 0.0220 & -- & -- & 3.3377 & 0.5646 & -- & -- \\ 
  \bf{(SD)} & 0.1005 & 0.0134 & -- & -- & 0.7791 & 0.3016 & -- & -- \\ 
  GAMSEL & 1.2660 & \bf{0.0053} & \bf{1.0000} & 0.1257 & 2.0432 & 0.8613 & \bf{1.0000} & 0.1400 \\ 
  \bf{(SD)} & 0.0544 & 0.0059 & 0.0000 & 0.1305 & 0.3492 & 0.4236 & 0.0000 & 0.1559 \\ 
  AIC(DPD) & 1.2529 & 0.0064 & 1.0000 & 0.8118 & 1.5512 & 0.0127 & 0.9950 & 0.7962 \\ 
  \bf{(SD)} & 0.0729 & 0.0078 & 0.0000 & 0.1865 & 0.1919 & 0.0304 & 0.0321 & 0.2907 \\ 
  BIC(DPD) & 1.2854 & 0.0062 & 0.9802 & 0.9161 & \bf{1.5302} & 0.0126 & 0.9933 & 0.9200 \\ 
 \bf{(SD)}  & 0.1402 & 0.0108 & 0.0717 & 0.1046 & 0.1783 & 0.0306 & 0.0350 & 0.1508 \\ 
  EBIC(DPD) & 1.3871 & 0.0088 & \bf{0.9345} & 0.9705 & 1.5306 & 0.0133 & 0.9917 & 0.9448 \\ 
 \bf{(SD)} & 0.2287 & 0.0132 & 0.1050 & 0.0637 & 0.1820 & 0.0309 & 0.0375 & 0.0843 \\ 
  CP(DPD) & \bf{1.2529} & 0.0061 & 1.0000 & 0.8322 & 1.5368 & \bf{0.0116} & 0.9950 & 0.8152 \\ 
 \bf{(SD)} & 0.0725 & 0.0074 & 0.0000 & 0.1673 & 0.1826 & 0.0280 & 0.0321 & 0.2613 \\ 
   \bottomrule
\end{tabularx}

\end{table*}

\begin{table*}[]
\setlength{\tabcolsep}{1.5pt}
\caption{The average RPE, MSE of $\widehat{\sigma}$, sensitivity, and specificity of different estimators in pure data with $\epsilon \sim N(0,\sigma^2)$ and data with 5\% contamination with outliers drawn from $\epsilon_c \sim N(8, \sigma^2)$, where $\sigma=1$ (SNR=$7.8082$). The second line gives the corresponding standard deviations (SD). The number of additive components, $p$, is 20, with 12 set to zero.}
\label{tab:case2}
\centering
\scriptsize
\begin{tabularx}{\textwidth}{lcccccccc}
 \toprule
\multirow{2}{*}{\vspace{-6pt}\textbf{Estimators}} & \multicolumn{4}{c}{\textbf{Pure Data}}  & \multicolumn{4}{c}{\textbf{Contaminated Data}} \\ 
\cmidrule{2-9}
 & \textbf{RPE} & \textbf{MSE($\widehat{\sigma})$} & \textbf{Sensitivity} & \textbf{Specificity} & \textbf{RPE} & \textbf{MSE($\widehat{\sigma})$} & \textbf{Sensitivity} & \textbf{Specificity}\\
  \midrule
  % OLS & 5.2655 & 1.6655 & -- & -- & 5.8159 & 4.6851 & -- & -- \\ 
  % LASSO & 4.9658 & 1.6753 & 0.8975 & 0.3500 & 5.2969 & 4.7127 & 0.8638 & 0.4867 \\ 
  % LAD LASSO & 5.1596 & 1.8810 & 0.7425 & 0.9842 & 5.0817 & 5.1533 & 0.7000 & \textbf{0.9833} \\ 
  % GAM & 1.6822 & 0.0412 & -- & -- & 4.8521 & 0.8965 & -- & -- \\ 
  % GAMSEL & \textbf{1.2117} & 0.0090 & \textbf{1.0000} & 0.1275 & 1.9616 & 1.6249 & \textbf{0.9900} & 0.2783 \\ 
  % %RPLAM & 15497.7725 & -- & -- & -- & 2366493.0076 & -- & -- & -- \\ 
  % AIC(DPD) & 1.2319 & 0.0081 & 0.9984 & 0.8323 & 1.4614 & 0.0341 & 0.9706 & 0.8020 \\ 
  % BIC(DPD) & 1.2836 & 0.0067 & 0.9359 & 0.9719 & 1.4139 & 0.0104 & 0.9441 & 0.9412 \\ 
  % EBIC(DPD) & 1.3911 & 0.0108 & 0.8469 & \textbf{0.9896} & 1.4697 & 0.0196 & 0.8912 & 0.9735 \\ 
  % CP(DPD) & 1.2201 & \textbf{0.0058} & 0.9984 & 0.8927 & \textbf{1.3936} & \textbf{0.0079} & 0.9721 & 0.8794 \\ 
  %  \bottomrule
   OLS & 5.2655 & 1.6655 & -- & -- & 5.5573 & 3.5274 & -- & -- \\ 
  \bf{(SD)} & 0.1081 & 0.1506 & -- & -- & 0.3323 & 0.6980 & -- & -- \\ 
  LASSO & 4.9658 & 1.6753 & 0.8975 & 0.3500 & 5.1435 & 3.5499 & 0.8662 & 0.4908 \\ 
  \bf{(SD)} & 0.1121 & 0.1515 & 0.0544 & 0.1800 & 0.2728 & 0.7128 & 0.0993 & 0.2336 \\ 
  LAD LASSO & 5.1596 & 1.8810 & 0.7425 & 0.9842 & 5.1214 & 3.8940 & 0.7000 & \bf{0.9867} \\ 
  \bf{(SD)} & 0.1920 & 0.1835 & 0.0867 & 0.0349 & 0.2318 & 0.7654 & 0.0852 & 0.0307 \\ 
  GAM & 1.6822 & 0.0412 & -- & -- & 3.7573 & 0.4145 & -- & -- \\ 
  \bf{(SD)} & 0.1539 & 0.0164 & -- & -- & 0.7688 & 0.2378 & -- & -- \\ 
  GAMSEL & \bf{1.2117} & 0.0090 & \bf{1.0000} & 0.1275 & 1.6969 & 0.7997 & \bf{0.9888} & 0.2117 \\ 
  \bf{(SD)} & 0.0610 & 0.0080 & 0.0000 & 0.1089 & 0.2516 & 0.3603 & 0.0360 & 0.1337 \\ 
  AIC(DPD) & 1.2319 & 0.0081 & 0.9984 & 0.8323 & 1.5444 & 0.0444 & 0.9404 & 0.8643 \\ 
  \bf{(SD)} & 0.1205 & 0.0086 & 0.0140 & 0.2240 & 0.2002 & 0.0793 & 0.1063 & 0.1918 \\ 
  BIC(DPD) & 1.2836 & 0.0067 & 0.9359 & 0.9719 & 1.5695 & 0.0497 & 0.8808 & 0.9622 \\ 
  \bf{(SD)} & 0.1558 & 0.0077 & 0.1125 & 0.0439 & 0.2169 & 0.0775 & 0.1299 & 0.0928 \\ 
  EBIC(DPD) & 1.3911 & 0.0108 & 0.8469 & 0.9896 & 1.5933 & 0.0567 & 0.8503 & 0.9797 \\ 
  \bf{(SD)} & 0.1932 & 0.0130 & 0.1289 & 0.0277 & 0.2073 & 0.0779 & 0.1284 & 0.0423 \\ 
  CP(DPD) & 1.2201 & \bf{0.0058} & 0.9984 & 0.8927 & \bf{1.5400} & \bf{0.0431} & 0.9273 & 0.8953 \\ 
  \bf{(SD)} & 0.0875 & 0.0064 & 0.0140 & 0.1300 & 0.1994 & 0.0786 & 0.1109 & 0.1638 \\ 
  \bottomrule
\end{tabularx}

\end{table*}

The performance of the estimators is compared using four measures: the relative prediction error (RPE), the mean squared error (MSE) of $\widehat{\sigma}$, sensitivity, and specificity. To calculate the RPE, we have generated test data of size $R=1000$. Then, the RPE is defined as
\begin{equation}
RPE = \frac{1}{R  \sigma^2} \sum_{r=1}^R \left(\widehat{y}_r - y_r\right)^2, 
\label{rpe}
\end{equation}
where $y_r$ is the $r$-th observation in the test data, and $\widehat{y}_r$ is its predicted value calculated from the training data. Here, $\sigma^2$ is the true value of the model variance. We replicated each simulation using $S=100$ training datasets, and their average performance is reported. We compared the performance of the estimators both for the pure and the contaminated data. However, as the outliers are unpredictable, the test data is always generated from the pure model without contamination to avoid prediction being affected by the outliers.  Otherwise, if the test data were generated from the contaminated distribution $m$, defined in Section \ref{sec:dpd}, the RPE serves as a performance metric to assess how closely the estimated distribution \( f_{\widehat{\theta}} \) aligns with \( m \). Since our aim is not to model the outliers, we seek a measure that reflects the deviance between \( f_{\theta_0} \) and \( f_{\widehat{\theta}} \). This can be derived from the RPE, but it necessitates that the test data be generated directly from \( f_{\theta_0} \).

The MSE of $\widehat{\sigma}$ is calculated as
\begin{equation}
MSE = \frac{1}{S} \sum_{s=1}^S \left( \widehat{\sigma}^s-\sigma \right)^2,
\end{equation}
where $\widehat{\sigma}^s$ is the estimate of $\sigma$ obtained from the $s$-th replication of the training data. The other two measures, sensitivity and specificity, are also calculated from the training data, and the average values are reported in the following tables. While sensitivity is the proportion of correctly estimated nonzero additive components, specificity is the proportion of correctly estimated zero additive components in model \eqref{eqn:model-def}.

First, we generated pure data, where the error term in model \eqref{eqn:model-def} comes from $\epsilon \sim N(0, \sigma^2)$. We have taken $\sigma=1$, and the number of additive components, $p$, is 15, in which 7 additive functions ($g_i$) are set to zero. The simulated signal-to-ratio (SNR) is approximately $6.9645$. The results are reported in Table \ref{tab:case1}, where the best results are displayed in bold font. As the true model is non-linear and sparse, the three estimators, OLS, LASSO, and LAD LASSO, which assume a linear relationship, give very high RPE values. On the other hand, the remaining estimators, produce low RPE values. The MSE of $\widehat{\sigma}$ is also very low. All our proposed estimators and GAMSEL give very high sensitivity values. Furthermore, our estimators produce high specificity values, whereas the value of specificity from the GAMSEL is only $0.15$. Thus, GAMSEL does not reduce model dimension enough in this setup.  

Using the same setup, we compared the estimators in contaminated data, which includes 5\% outliers drawn from $N(8, \sigma^2)$ in the error term. As it violates the assumption that $\mathbb{E}(\epsilon)=0$, the intercept parameter of the model is unidentifiable. However, it does not impact the other parameters or the predicted values. We notice that the performance of GAM and GAMSEL estimators is deteriorated, as they give inflated values of the RPE and the MSE of $\widehat{\sigma}$. The values for the proposed DPD estimators are almost the same as those obtained in the pure data. It shows that the proposed method successfully eliminated the effect of outliers. Although the RPLAM approach does not provide sparse estimators, it produces robust estimators. However, it comes with a high price by losing efficiency in pure data as RPE and MSE values were large. It is a common feature with many robust estimators -- they lose significant efficiency to achieve robustness. On the other hand, the tuning parameter $\nu$ in our DPD measure controls the efficiency and robustness. Moreover, the adaptive choice of $\nu$ successfully produces high efficiency in pure data and gives proper robustness property when there are outliers. 

%We changed a few parameters in the previous setups in the following two simulations. First, we consider a relatively high dimension and a low SNR by taking $p=20$ covariates and $\sigma=2.5$. It gives a simulated SNR of 1.2779. The results are presented in Table \ref{tab:case2}. For the contaminated data, we consider contamination of 10\% with outliers drawn from $N(25, \sigma^2)$. While most of the results are similar to Table \ref{tab:case1}, the results for OLS, LASSO, and LAD LASSO show some improvement in pure data. We notice that the RPLAM algorithm often fails to converge in the high dimension, producing very high RPE and MSE values.   

Now, we consider a relatively high dimension by taking $p=20$ covariates. It gives a simulated SNR of 7.8082. The results are presented in Table \ref{tab:case2}. For the contaminated data, we consider contamination of 5\% with outliers drawn from $N(8, \sigma^2)$, where $\sigma = 1$. Table \ref{tab:case1} shows that the results are similar to the previous simulation.

Finally, we consider two heavy-tailed error distributions -- the Cauchy and the $\chi^2_1$, for the contaminated data. Theoretically, $\mathbb{E}(\epsilon)$ must be bounded to allow for the adjustment of the model's intercept term, centering $\mathbb{E}(\epsilon)$ at zero. While the Cauchy distribution has an unbounded mean, the simulated data remains finite. Therefore, we selected this example to illustrate and compare the estimators in the presence of extreme outliers. Since we do not have any target $\sigma$ value for the error term, the MSE of $\widehat{\sigma}$ is not computed. Moreover, the RPE measure is replaced by the mean prediction error (MPE) that does not have the $\sigma^2$ term in the denominator of the RPE in \eqref{rpe}. We generated $R=1000$ test data from the same distribution (Cauchy or $\chi^2_1$) to calculate the MPE. It should be noted that the test data also contain outliers with respect to the normal model, so the MPE is not a good measure to compare estimators that give equal weight (with the square loss) to all observations. One alternative is to delete a portion (say, $\omega$\%) of observations that yield large residuals (on either side), then calculate the MPE based on the remaining $(100-\omega)$\% test data. We can take a reasonable value of $\omega$ (say, $\omega=5$), provided it is not too small or too large. Table \ref{tab:case3} displays the results, where we used the full MPE and the MPE with 5\% trimmed test data. As expected, the full MPE is very high for all estimators, as they also contain predictions for the outlying observations. On the other hand, the 5\% trimmed MPE of the RPLAM and our methods are very small.  The other estimators completely break down in the case of the Cauchy distribution, as both the mean and SD of the MPEs are exceptionally large. Notably, the SDs of MPEs are approximately ten times larger than the means, highlighting the sensitivity of the SD to outliers due to its dependence on the square of the differences from the mean. In contrast, the $\chi^2_1$ distribution has a mild influence on non-robust estimators, as its tail is relatively lighter.

Theoretically, the Cauchy error violates the model assumption in \eqref{eqn:model-def} as its variance is unbounded. However, in the simulation, we can make it bounded by setting a big bound on the observations. 
Overall, the simulation results illustrate that the proposed estimators yield superior performance in the presence of contaminated data while achieving model selection. At the same time, unlike other robust estimators, they do not lose much efficiency in the pure data.  
\begin{table*}[t]
\setlength{\tabcolsep}{1pt}
\caption{The average MPE in 5\% trimmed and full data, sensitivity, and specificity of different estimators for the Cauchy and $\chi^2_1$ distributions. The second line gives the corresponding standard deviations (SD). While the number of additive components, $p$, is 20, with 12 set to zero in the first case,  the number of additive components, $p$, is 15, with 7 set to zero in the second case.}
\label{tab:case3}
\centering
\scriptsize
\begin{tabularx}{\textwidth}{lcccccccc}
 \toprule
\multirow{2}{*}{\vspace{-6pt}\textbf{Estimators}} & \multicolumn{4}{c}{\textbf{Cauchy Distribution}}  & \multicolumn{4}{c}{\textbf{$\chi^2_1$ Distribution}} \\ 
\cmidrule{2-9}
 & \textbf{MPE} & \textbf{MPE} & \textbf{Sensitivity} & \textbf{Specificity} & \textbf{MPE} & \textbf{MPE} & \textbf{Sensitivity} & \textbf{Specificity}\\
 & \textbf{(5\%)} & \textbf{(full)} & \textbf{ } & \textbf{ } & \textbf{(5\%)} & \textbf{(full)} & \textbf{ } & \textbf{ }\\
  \midrule
% OLS & 40921.2138 & 82108.8664 & -- & -- & 3.0884 & 4.3746 & -- & -- \\ 
%   LASSO & 1878.7892 & 4076.1354 & 0.1700 & 0.9142 & 3.0397 & 4.3192 & 0.9062 & 0.4229 \\ 
%   LAD LASSO & 89.7303 & 886.8324 & 0.5687 & 0.9383 & 4.8874 & 6.7205 & 0.3212 & 0.9371 \\ 
%   GAM & 219624.0075 & 850254.0111 & -- & -- & 1.6161 & 2.6228 & -- & -- \\ 
%   GAMSEL & 1877.8192 & 4075.4128 & 0.1512 & 0.9450 & 1.1580 & 2.0914 & \textbf{1.0000} & 0.1443 \\ 
%   RPLAM & 9.4843 & 864.7758 & -- & -- & \textbf{0.9908} & \textbf{2.0335} & -- & -- \\ 
%   AIC(DPD) & 9.4308 & 716.4293 & \textbf{0.8600} & 0.9017 & 1.0679 & 2.0932 & 0.9150 & 0.8257 \\ 
%   BIC(DPD) & \textbf{9.4024} & 716.3200 & 0.8100 & 0.9858 & 1.1204 & 2.1668 & 0.8263 & 0.9886 \\ 
%   EBIC(DPD) & 10.4911 & \textbf{715.6276} & 0.6188 & \textbf{0.9967} & 1.1370 & 2.1902 & 0.8100 & \textbf{0.9957} \\ 
%   Cp(DPD) & 9.4293 & 716.4004 & 0.8475 & 0.9133 & 1.0673 & 2.0999 & 0.9125 & 0.8371 \\
%    \bottomrule
   OLS & 40921.21 & 82108.87 & -- & -- & 3.09 & 4.37 & -- & -- \\ 
  \bf{(SD)} & 393346.61 & 706828.95 & -- & -- & 0.10 & 0.10 & -- & -- \\ 
  LASSO & 1878.79 & 4076.14 & 0.17 & 0.91 & 3.04 & 4.32 & 0.91 & 0.42 \\ 
  \bf{(SD)} & 17721.01 & 29432.14 & 0.23 & 0.16 & 0.09 & 0.09 & 0.09 & 0.22 \\ 
  LADLASSO & 89.73 & 886.83 & 0.57 & 0.94 & 4.89 & 6.72 & 0.32 & 0.94 \\ 
  \bf{(SD)} & 721.79 & 1486.03 & 0.11 & 0.12 & 0.21 & 0.29 & 0.16 & 0.08 \\ 
  GAM & 219624.01 & 850254.01 & -- & -- & 1.62 & 2.62 & -- & -- \\ 
  \bf{(SD)} & 2086581.73 & 7485942.04 & -- & -- & 0.27 & 0.30 & -- & -- \\ 
  GAMSEL & 1877.82 & 4075.41 & 0.15 & 0.94 & 1.16 & 2.09 & \bf{1.00} & 0.14 \\ 
  \bf{(SD)} & 17721.11 & 29432.23 & 0.26 & 0.13 & 0.13 & 0.13 & 0.00 & 0.14 \\ 
  RPLAM & 9.48 & 864.78 & -- & -- & \bf{0.99} & \bf{2.03} & -- & -- \\ 
  \bf{(SD)} & 0.62 & 940.30 & -- & -- & 0.51 & 0.15 & -- & -- \\ 
  AIC(DPD) & 9.43 & 716.43 & \bf{0.86} & 0.90 & 1.07 & 2.09 & 0.92 & 0.83 \\ 
  \bf{(SD)} & 0.61 & 3.04 & 0.10 & 0.17 & 0.16 & 0.14 & 0.11 & 0.22 \\ 
  BIC(DPD) & \bf{9.40} & 716.32 & 0.81 & 0.99 & 1.12 & 2.17 & 0.83 & 0.99 \\ 
  \bf{(SD)} & 0.52 & 3.04 & 0.09 & 0.04 & 0.16 & 0.14 & 0.10 & 0.04 \\ 
  EBIC(DPD) & 10.49 & \bf{715.63} & 0.62 & \bf{1.00} & 1.14 & 2.19 & 0.81 & \bf{1.00} \\ 
  \bf{(SD)} & 2.13 & 1.78 & 0.31 & 0.03 & 0.15 & 0.14 & 0.09 & 0.02 \\ 
  CP(DPD) & 9.43 & 716.40 & 0.85 & 0.91 & 1.07 & 2.10 & 0.91 & 0.84 \\ 
  \bf{(SD)} & 0.63 & 3.02 & 0.10 & 0.17 & 0.16 & 0.15 & 0.11 & 0.21 \\ 
  \bottomrule
\end{tabularx}

\end{table*}

\section{Real Data Analysis} \label{sec:data-analysis} 

This section presents two real data examples: NCI 60 cell line data \cite{shankavaram2007transcript} and forest fire data \cite{cortez2007data}. As discussed in Section \ref{sec:simulations}, the proposed estimators are compared with different methods. Since the RPLAM approach exhibited convergence issues in these datasets, it is not included in the further comparison. For comparison, we consider the root mean prediction error (RMPE) as given below
\begin{equation}
    RMPE = \sqrt{\frac{1}{n} \sum_{i=1}^n (y_i - \widehat{y}_i)^2},
\end{equation}
where $\widehat{y}_i$ is the estimate of the response variable, and $y_i$ is the original observation. %As the sample size of the first dataset is very small, the RMPE is calculated on the full data instead of using a cross-validation technique. 
 We have also used the mean absolute deviation (MAD) $\frac{1}{n} \sum_{i=1}^n |y_i - \widehat{y}_i|$ as a robust measure of prediction accuracy. We randomly partitioned the data into training and test sets, with 75\% observations belong to the training set. The process is repeated a few times and the average RMPE and MAD are reported. 
Finally, We computed the percentage of dimension reduction. Suppose that the final model selects $q$ regressor variables, where the number of variables in the full model is $p$. Then, we define the percentage of dimension reduction as 
\begin{equation}
    \text{Dim. Reduction} = \frac{100(p-q)}{p}. \label{dim_red}
\end{equation}

It is essential to recognize that RMPE is sensitive to outliers. Consequently, even with a robust model fit, a low RMPE may not be achieved if the test data contain outliers. While the GAM aims to minimize the residual sum of squares (RSS) within the training data, this can lead to a reduced test RMPE. However, in the process of minimizing RSS or RMPE, the GAM fit may gravitate toward the outliers. To obtain a more accurate representation from the RMPE measure, it is necessary to remove outliers from the test data. Nonetheless, detecting outliers can be a significant challenge and is often subjective, especially in high-dimensional spaces. Moreover, employing robust fitting techniques is crucial to mitigate the masking effect of these outliers.

To demonstrate that the robust estimator provides a better fit for the “good” observations, we first trimmed $\omega\%$ of the test observations with the highest absolute errors, $|y_i - \hat{y}_i|$, and subsequently calculated the RMPE from the remaining observations. Our objective is to illustrate that when all outliers are effectively removed from the test data, robust methods will yield a lower RMPE, even if their RMPE on the full test dataset may exceed that of OLS or GAM methods. Since the precise proportion of outliers is unknown, we explored several values of $\omega$, such as 5\% and 10\%. On the other hand, the MAD provides an overall performance metric that is less influenced by outliers. The LAD LASSO minimizes the MAD in the training data, which should lead to a smaller MAD in the test data under the assumption of a linear regression function. However, when the true regression function is non-linear, our proposed method is preferable.

%%%

\begin{table*}[]
\caption{The RMPE  and MAD for the NCI 60 Cell Line Data. The last row is the percentage of dimension reduction as given in Equation \eqref{dim_red} with $p=10$.\\}
\setlength{\tabcolsep}{1pt}
\scriptsize
\centering
\begin{tabular}{lrrrrrrrrr}
  \hline
   & & & & & & \multicolumn{4}{c}{DPD} \\
  \cline{7-10}
 & OLS & LASSO & LAD LASSO & GAM & GAMSEL & AIC & BIC & EBIC & Cp \\ 
  \hline
10\% Trim RMPE & 1.25 & 1.24 & 1.14 & 0.96 & 1.21 & \textbf{0.48} & 0.50 & 0.50 & 0.50 \\ 
  5\% Trim RMPE & 1.48 & 1.50 & 1.50 & 1.14 & 1.44 & \textbf{0.54} & 0.59 & 0.59 & 0.59 \\ 
  RMPE & 1.83 & 1.93 & 2.22 & \textbf{1.43} & 1.79 & 1.91 & 1.94 & 1.94 & 1.94 \\
  MAD & 1.35 & 1.40 & 1.42 & 1.04 & 1.35 & \textbf{0.70} &  0.74 & 0.74 & 0.74 \\
  Dim. Reduction & 0.00 & 40.00 & \textbf{70.00} & 0.00 & 10.00 & 0.00 & 10.00 & 10.00 & 10.00 \\ 
   \hline
\end{tabular} \label{data:NCI60}
\end{table*}

\bigskip
\textbf{NCI 60 Cell Line Data:}
The cancer cell panel dataset from the National Cancer Institute contains information on 60 human cancer cell lines. The dataset can be accessed through CellMiner, a web application available at http://discover.nci.nih.gov/cellminer/. Our analysis focuses on the relationship between protein expression and gene expression, where the latter is measured by an Affymetrix HG-U133A chip that provides data on 22,283 gene predictors, normalized using the GCRMA method. Based on 162 antibodies, protein expressions were obtained via reverse-phase protein lysate arrays and $\log_2$ transformed. We removed one observation due to missing values in the gene expression data, leading to a final sample size of $n = 59$. Our investigation builds upon previous research by \cite{shankavaram2007transcript,lee2011sparse}, who also used this dataset to study gene expression and protein levels but with non-robust methods that resulted in complex models with hundreds to thousands of predictors that are difficult to interpret. \cite{alfons2013sparse} analyzed this dataset using a robust sparse least trimmed square regression estimator. We aim to extend their analysis by comparing the performance of the DPD estimators to other competing methods.

The response variable is the protein expression based on the KRT18 antibody, constituting the variable with the largest mean absolute deviation (MAD). It measures the expression level of the protein keratin 18, which is known to be persistently expressed in carcinomas \cite{oshima1996oncogenic}. The sample size is very small, so we considered ten genes with the highest correlation with the response variable. We used a robust correlation estimate based on winsorization using the \emph{corHuber} function of the \texttt{robustHD} \cite{robusthd} package in \texttt{R}. 

Table \ref{data:NCI60} presents the RMPEs of different estimators for this dataset. As expected, while the GAM estimator shows the smallest RMPE in the full test data, the DPD estimator produces the smallest RMPE based on 5\% or 10\% truncated test data. For the other methods, however, there is only a marginal reduction of RMPE from the full to truncated test data, which indicates that the outliers significantly influence them. The MAD is also significantly low for our robust methods. In addition, we notice a low percentage of dimension reduction due to the high correlations of the ten genes with the protein, which range from 0.7760 to 0.6169. The AIC-based variable selection produces the full model, including all genes, and performs marginally better than the other DPD estimators.

\begin{figure}[t]
		\centering%
		\begin{tabular}{cccc} \hspace{-1cm}
			\includegraphics[width=3.5cm]{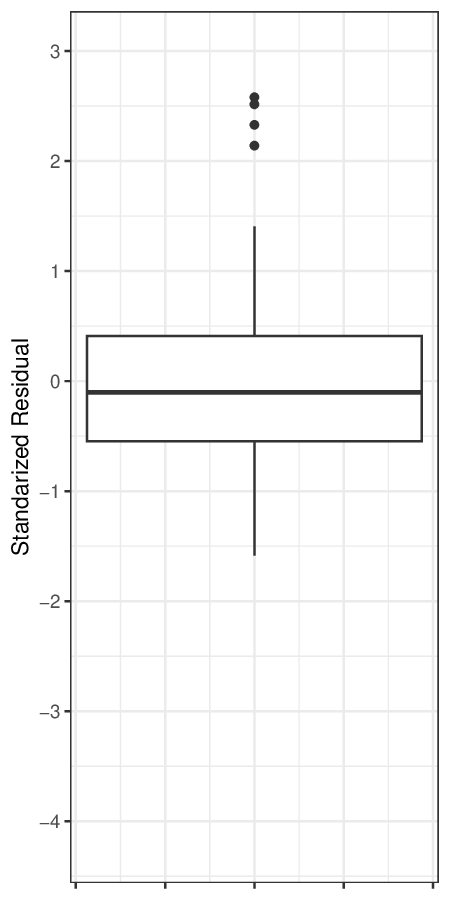} &
			\includegraphics[width=3.5cm]{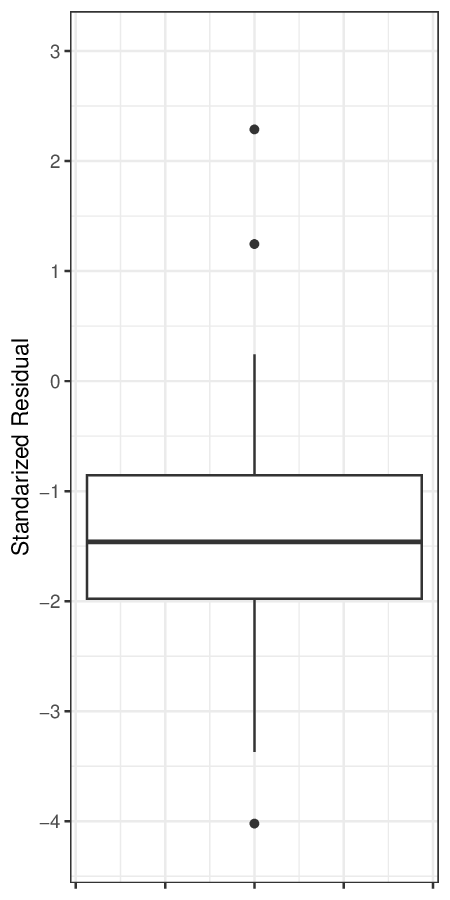} & \includegraphics[width=3.5cm]{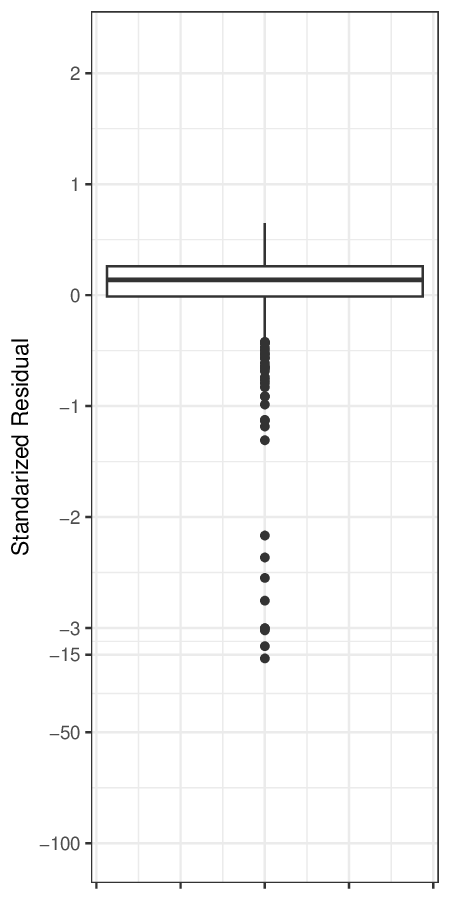} &
			\includegraphics[width=3.5cm]{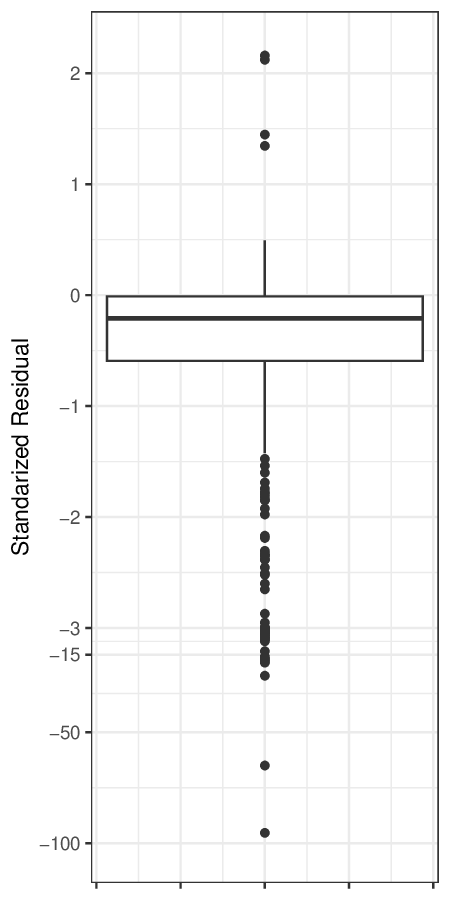}             \\
            OLS (NCI Data) & DPD (NCI Data) & OLS (Forest Fire  Data) & DPD (Forest Fire  Data)
			 \end{tabular}%
		\caption{The residual box plots in the complete NCI and forest fire data using the OLS and DPD with $\nu = 0.05$.}
		\label{fig:boxplot}
	\end{figure}

The first two box plots in Figure \ref{fig:boxplot} illustrate the distributions of standardized residuals obtained from fitting the complete dataset using OLS and the DPD estimator with $\nu = 0.05$. While the box plots indicate a few mild outliers, all standardized residuals (with the exception of a few from the DPD estimator) fall within the range of $-3$ to 3. Therefore, there is generally no strong evidence to warrant the removal of any observations. However, if we were to exclude the four observations with OLS standardized residuals exceeding 2, the RMPE and MAD for OLS would significantly decrease from $1.83$ and $1.35$ to $0.92$ and $0.71$, respectively. Notably, the DPD method demonstrates even better performance without requiring the deletion of any observations.

\bigskip
\noindent
\textbf{Forest Fire Data:} We investigate the forest fire data collected from the Montesinho natural park in the Trás-os-Montes northeast region of Portugal. The park is known for its diverse flora and fauna and is situated in a supraediterranean climate with an average annual temperature ranging from 8 to 12\textdegree C. Our analysis is based on data collected from two main sources between January 2000 and December 2003. The first source is a database compiled by the inspector monitoring fire occurrences in Montesinho. This database includes information on several features related to each forest fire event, such as the date, time, and spatial location within a $9\times 9$
grid ($x$ and $y$ axis), the type of vegetation involved, the four components of the Fire Weather Index (FWI) system, and the total burned area. The second source of data is collected by a meteorological station located in the center of Montesinho Park and managed by the Bragança Polytechnic Institute. This database contains weather conditions, including temperature, relative humidity, wind speed, and rainfall. The combined dataset contains 517 entries with 13 attributes. We refer to \cite{cortez2007data} for more information about this dataset.

\begin{table*}[]
\caption{The RMPE and MAD for the forest fire data. The last row is the percentage of dimension reduction as given in Equation \eqref{dim_red} with $p=11$.\\}
\setlength{\tabcolsep}{1pt}
\scriptsize
\centering
\begin{tabular}{lrrrrrrrrr}
    \hline
   & & & & & & \multicolumn{4}{c}{DPD} \\
  \cline{7-10}
 & OLS & LASSO & LAD LASSO & GAM & GAMSEL & AIC & BIC & EBIC & Cp \\ 
  \hline
10\% Trim RMPE & 13.42 & 13.21 & 5.21 & 14.52 & 11.13 & \textbf{4.42} & 4.50 & 4.50 & 4.50 \\ 
  5\% Trim RMPE & 14.93 & 12.01 & 9.36 & 16.55 & 11.98 & \textbf{8.48} & 8.54 & 8.54 & 8.54 \\ 
  RMPE & 62.81 & 62.82 & 64.77 & \textbf{61.67} & 63.59 & 64.36 & 64.40 & 64.40 & 64.40 \\ 
  MAD & 19.54 & 18.93 & \textbf{12.89} & 19.89 & 18.57 & 12.93 &  13.00 & 13.00 & 13.00 \\
  Dim. Reduction & 0.00 & 0.00 & 90.91 & 0.00 & \textbf{100.00} & 18.18 & 54.55 & 54.55 & 54.55 \\ 
   \hline
\end{tabular} \label{data:fire}
\end{table*}

The burned area is the response variable for this dataset. However, it includes 247 samples with a zero value, indicating a burn area lower than 100 square meters. The RMPE values of different estimators are presented in Table \ref{data:fire}. In the full test data, GAM gives the smallest RMPE. In fact, all methods give very high RMPE in the full test data. However, we noticed a substantial reduction of RMPE in 5\% and 10\% trimmed test data. The LAD LASSO produces the smallest MAD value, indicating that the true regression model is almost linear. The LAD LASSO and DPD-based estimators give the best results for this dataset. It indicates that they provide an excellent fit for ``good'' observations and down-weights outliers. On the other hand, other methods lose significant efficiency due to outlying observations. 

For the DPD-based estimators, all variable selection methods, except the AIC, provide the same outputs. The AIC is liberal in reducing the dimension of the covariates but produces the lowest RMPE in the trimmed test data. It is interesting to see that the GAMSEL approach eliminates all covariates and yields the intercept-only model. Nevertheless, it gives better prediction error than the OLS, LASSO, and GAM estimators in the trimmed test data. Furthermore, DPD-based estimators show that further reduction of RMPE values can be achieved with the help of properly selected covariates.

The last two box plots in Figure \ref{fig:boxplot} display the standardized residuals for the complete dataset, calculated using both the OLS and DPD estimator with $\nu = 0.05$. Among the OLS-standardized residuals, only five values fall below $-3$. In contrast, the DPD plot reveals that 36 observations, constituting 6.96\% of the data, occupy this lower region. Should we eliminate these 36 observations, the RMPE and MAD for OLS would decline dramatically from 62.81 and 19.54 to 6.71 and 4.52, respectively. The presence of outliers results in an overestimation of $\sigma$ in the OLS method (calculated as 63.55), effectively masking many outliers in the residual plot. In contrast, the DPD method inherently downweights outliers, allowing for improved performance without explicit detection.

%%%%%%%%%%%%%%%%%%%%%%%%%%%%%%%%%%%%%%%%%%%%%%%%%%%%%%%%%%%%%%%%%%%%%%%%%%%%%%%%%%%%%%%%%%%%%%%%%%%%%%%%%%%%%%%%%%%%%%%%%%%%
\section{Summary and Conclusions}
This study provides a new robust variable selection method in the high dimensional additive models. The proposed approach estimates the model using B-splines with density power divergence loss function and employs a nonconcave penalty for the regularization. Our theoretical results are based on the assumption that the error is sub-Weibull, which includes both sub-Exponential and sub-Gaussian as special cases. The results from the numerical study illustrate that the proposed method works well with finite samples. While we consider multiple information criteria for model selection, we observe that they all produce similar results, and none stand out. We noticed that the H-score function is unstable when the total number of basis functions is large compared to the sample size. It is due to the small value of $\widehat{\sigma}$ in the over-parameterized case, and the H-score function fails to find the optimum value of the DPD parameter $\nu$. Alternatively, we can estimate $\sigma$ following \cite{chen2018error} and update the H-score function. However, this procedure is computationally intensive as it uses a model selection and cross-validation technique iteratively. Therefore, finding an efficient algorithm in this situation is a worthwhile topic for future research.  Another interesting direction for future research is to consider the robust variable selection in high dimensional models using other divergence measures, e.g., \cite{jones2001comparison} and \cite{maji2019robust}.

%%%%%
\section*{Supplementary Materials}

Supplementary material includes the necessary lemma and proofs of the asymptotic results in Section \ref{sec:asymp}.

%%%%%%%%%%%%%%%%%%%%%%%%%%%%%%%%%%%%%%%%%%%%%%%%%%%%%%%%%%%%%%%%%%%%%%%%%%%%%%%%%%%%%%%%%%%%%%%%%%%%%%%%%%%%%%%%%%%%%%%%%%%%
\section*{Acknowledgements}
The authors thank the associate editor and the two reviewers for their valuable suggestions which led to significant improvements in the paper.

%%%%%%%%%%%%%%%%%%%%%%%%%%%%%%%%%%%%%%%%%%%%%%%%%%%%%%%%%%%%%%%%%%%%%%%%%%%%%%%%%%%%%%%%%%

\iffalse

\bibhang=1.7pc
\bibsep=2pt
\fontsize{9}{14pt plus.8pt minus .6pt}\selectfont
\renewcommand\bibname{\large \bf References}
%\begin{thebibliography}{11}
\expandafter\ifx\csname
natexlab\endcsname\relax\def\natexlab#1{#1}\fi
\expandafter\ifx\csname url\endcsname\relax
  \def\url#1{\texttt{#1}}\fi
\expandafter\ifx\csname urlprefix\endcsname\relax\def\urlprefix{URL}\fi

\fi

%% use bibfile 
%\bibliographystyle{IEEEtran}
\bibliographystyle{IEEEtran}      % Chicago style, author-year citations
\bibliography{highdpd}   % name your BibTeX data base

\iffalse
%%%%%%%%%%%%%%%%%%%%%%%%%%%%%%%%%%%%%%%%%%%%%%%%%%%%%%%%%%%%%%%%%%%%%%%%%%%%%%%%%%%%%%%%%%%%%%%%%%%%%%%%%%%%%%%%%%%%%%%%%%%%
\vskip .65cm
\noindent

\vskip 2pt
\noindent
E-mail: (sbchatla@utep.edu)
\vskip 2pt

\noindent
second author affiliation
\vskip 2pt
\noindent
E-mail: (amandal@utep.edu)

%%%%%%%%%%%%%%%%%%%%%%%%%%%%%%%%%%%%%%%%%%%%%%%%%%%%%%%%%%%%%%%%%%%%%%%%%%%%%%%%%%%%%%%%%%%%%%%%%%%%%%%%%%%%%%%%%%%%%%%%%%
%%%%%%%%%%% Supplementary Material
\newpage

\renewcommand{\baselinestretch}{2}

\markright{ \hbox{\footnotesize\rm Statistica Sinica: Supplement
%{\footnotesize\bf 24} (201?), 000-000
}\hfill\\[-13pt]
\hbox{\footnotesize\rm
%\href{http://dx.doi.org/10.5705/ss.20??.???}{doi:http://dx.doi.org/10.5705/ss.20??.???}
}\hfill }

\markboth{\hfill{\footnotesize\rm Suneel Babu  Chatla AND Abhijit Mandal} \hfill}
{\hfill {\footnotesize\rm Robust additive model} \hfill}

\renewcommand{\thefootnote}{}
$\ $\par \fontsize{12}{14pt plus.8pt minus .6pt}\selectfont

%%%%%%%%%%%%%%%%%%%%%%%%%%%%%%%%%%%%%%%%%%%%%%%%%%%%%%%%%%%%%%%%%%%%%%%%%%%%%%%%%%%%%%%%%%%%%%%%%%%%%%%%%%%%%%%%%%%%%%%%%%%%
\fi

\newpage
\newgeometry{left=2cm,bottom=2cm,right=2cm}
\onecolumn

\setcounter{page}{1}

 \centerline{\large\bf Robust Variable Selection in High-dimensional}
\vspace{.25cm}
 \centerline{\large\bf  Nonparametric Additive model}
\vspace{2pt}
 %\centerline{\large\bf  via Nonconcave Penalty}
\vspace{.25cm}

\centerline{Suneel Babu Chatla and Abhijit Mandal} 
%\author{Suneel Babu Chatla and Abhijit Mandal}

\vspace{.4cm}
 \centerline{\it The University of Texas at El Paso, Texas}
\vspace{.55cm}
 \centerline{\bf Supplementary Material}
\vspace{.55cm}
%\fontsize{9}{11.5pt plus.8pt minus .6pt}\selectfont

\setcounter{section}{0}
\setcounter{equation}{0}
\def\theequation{S\arabic{section}.\arabic{equation}}
\def\thesection{S\arabic{section}}

\fontsize{12}{14pt plus.8pt minus .6pt}\selectfont

%\section*{Introduction}

\noindent
The supplementary material contains the proofs of Lemma \ref{lem:maximal-weibull} and Theorems \ref{thm:oracle} and \ref{thm:consistency}.  In the following lemma, we derive the maximal inequality for the product terms involving spline function and error. The proof requires results involving the Generalized Bernstein-Orlicz (GBO) norm. For completeness, we define it here. The function $\Psi_{\alpha, L}$ is defined in \cite{kuchibhotla2022moving}, with its inverse expressed as
\begin{align*}
    \Psi_{\alpha,L}^{-1} &:= \sqrt{\log(1+t)} + L (\log(1+t))^{1/\alpha} \text{  for all  } t \ge 0,
\end{align*}
for fixed $\alpha>0$ and $L \ge 0$. The GBO norm of a random variable $X$ is then
\begin{align*}
    \Vert X \Vert_{\Psi_{\alpha,L}} &:= \inf\{ \eta>0 ~:~ \mathbb{E}[\Psi_{\alpha,L}(|X|/\eta)] \le 1 \}.
\end{align*}

\begin{lemma} \label{lem:maximal-weibull}
Suppose assumptions \ref{as:1-function-class}--\ref{as:5-error} hold. Let
\begin{align*}
    T_{kj} &= n^{-1/2}m_n^{1/2} \sum_{i=1}^n B_{kj}(X_{ij}) \epsilon_{i}, \qquad \qquad 1 \le k \le m_n, 1 \le j \le p,
\end{align*}
and  $T_n=\underset{1 \le j \le p, 1 \le k \le m_n}{\max} |T_{kj}|$. Let $\alpha^*=\max\{0, \frac{\alpha-1}{\alpha}\}$. When $m_n\log(pm_n)/n \rightarrow 0$ as $n \rightarrow \infty$, we obtain the following bound:
\begin{align*}
         \mathbb{E}\left[ T_n \right]  =  \left( \sqrt{\log(pm_n)}    + n^{-1/2+\alpha^*}m_n^{1/2-\alpha^*}  (\log(pm_n))^{1/\alpha} \right) O(1).
\end{align*}
    
\end{lemma}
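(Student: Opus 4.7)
My plan is to treat $T_n$ as the maximum of $N = p m_n$ normalized sums of i.i.d.\ mean-zero sub-Weibull random variables and close the argument by a two-regime concentration inequality followed by a union bound and a tail integration. To set this up, observe that the sample-mean centering $B_{kj}(x) = b_k(x) - n^{-1}\sum_i b_k(X_{ij})$ introduces a nuisance piece $n^{-1/2}m_n^{1/2}\,\bar b_k \sum_i \epsilon_i$ inside $T_{kj}$. By Fact~2 (applied with $s=1$), $\mathbb{E}|\bar b_k| = O(m_n^{-1})$, and since $\sum_i \epsilon_i = O_p(\sqrt{n})$ (finite variance suffices, regardless of $\alpha$), this piece is $O_p(m_n^{-1/2})$ uniformly in $(k,j)$ and is therefore absorbed into the $O(1)$ on the right-hand side. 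It thus suffices to analyze $\widetilde T_{kj} = n^{-1/2}m_n^{1/2}\sum_i W_{i,kj}$ with $W_{i,kj} = b_k(X_{ij})\epsilon_i$ i.i.d.\ mean-zero.

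Next I record the two relevant scale estimates for the summands. By Fact~2 and independence of $X_{ij}$ and $\epsilon_i$, $\mathrm{Var}(W_{i,kj}) = \mathbb{E}[b_k^2]\,\mathbb{E}[\epsilon_i^2] \le C m_n^{-1}$, so that $\mathrm{Var}(\widetilde T_{kj}) = O(1)$ --- the normalization $n^{-1/2}m_n^{1/2}$ is exactly the CLT scaling for $W_{i,kj}$. For the heavy-tail scale, the B-spline basis is uniformly bounded, so $\|W_{i,kj}\|_{\psi_\alpha} \le \|b_k\|_\infty \|\epsilon_i\|_{\psi_\alpha} = O(1)$ uniformly in $(k,j,n)$.

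The third step is to invoke a Bernstein-type concentration inequality for sub-Weibull sums, in the spirit of \cite{kuchibhotla2022moving} (the reference already used for assumption~\ref{as:5-error}). It yields, for some $c_1, c_2 > 0$,
\begin{equation*}
  P\bigl(|\widetilde T_{kj}| > t\bigr) \le 2\exp(-c_1 t^2) + 2\exp\bigl(-c_2 (t/K_n)^\alpha\bigr),
\end{equation*}
where the first term is the variance-driven (Gaussian-like) regime exploiting $\mathrm{Var}(\widetilde T_{kj}) = O(1)$, and the second is the tail-driven regime. Careful tracking of constants through the inequality gives $K_n \asymp n^{-1/2+\alpha^*} m_n^{1/2-\alpha^*}$; the exponent $\alpha^* = \max\{0,(\alpha-1)/\alpha\}$ is precisely the interpolation measuring how much CLT-type averaging still helps (none for $\alpha \le 1$, and matching the first term at $\alpha = 2$).

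Finally, a union bound over the $N = pm_n$ coordinates gives $P(T_n > t) \le 2N\exp(-c_1 t^2) + 2N \exp\bigl(-c_2(t/K_n)^\alpha\bigr)$. Setting $t_n = C\bigl(\sqrt{\log N} + K_n (\log N)^{1/\alpha}\bigr)$ makes this decay, and the routine tail integration $\mathbb{E}[T_n] \le t_n + \int_{t_n}^\infty P(T_n > t)\,dt$ produces
\begin{equation*}
  \mathbb{E}[T_n] = O\bigl(\sqrt{\log(pm_n)}\bigr) + O\bigl(n^{-1/2+\alpha^*}m_n^{1/2-\alpha^*}(\log(pm_n))^{1/\alpha}\bigr),
\end{equation*}
which is exactly the claimed bound. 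The hypothesis $m_n\log(pm_n)/n \to 0$ is used both to keep the residual tail integral bounded and to ensure the centering nuisance term is dominated. The principal obstacle is pinning down the correct dependence of $K_n$ on $\alpha$: the sub-Weibull tail behaves qualitatively differently on either side of $\alpha = 1$ (moment/Rosenthal-type arguments for $\alpha < 1$ versus MGF/Chernoff arguments for $\alpha \ge 1$), and producing the unified bound stated above requires the Generalized Bernstein--Orlicz norm machinery of \cite{kuchibhotla2022moving}.
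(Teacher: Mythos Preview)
Your approach departs from the paper's in one essential respect, and that is exactly where the gap lies. The paper does \emph{not} treat $W_{i,kj}=b_k(X_{ij})\epsilon_i$ as an unconditional sub-Weibull summand. Instead it conditions on the design, applies the GBO-norm bound of \cite{kuchibhotla2022moving} to the $\epsilon_i$ with (now deterministic) weights $B_{kj}(X_{ij})$, and only then takes expectation over $X$, controlling $\mathbb{E}\max_{k,j}\|b\|_2$ and $\mathbb{E}\max_{k,j}\|b\|_\varpi$ via Fact~2 and a Bernstein-type maximal inequality. This is precisely what injects the $m_n^{-1}$ scaling of the spline moments into both regimes: one gets $\mathbb{E}\|b\|_2\lesssim (n/m_n)^{1/2}$ for the Gaussian piece and $\mathbb{E}\|b\|_\varpi\lesssim (n/m_n)^{1/\varpi}$ for the heavy-tail piece, yielding the stated rate.

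Your unconditional route records only $\mathrm{Var}(W_{i,kj})=O(m_n^{-1})$ and $\|W_{i,kj}\|_{\psi_\alpha}=O(1)$. But the GBO result you cite uses the $\psi_\alpha$-norms, not variances, in \emph{both} regimes: with $b_i=\|W_i\|_{\psi_\alpha}=O(1)$ one has $\|b\|_2=O(\sqrt n)$, giving a first term of order $m_n^{1/2}\sqrt{\log(pm_n)}$ rather than $\sqrt{\log(pm_n)}$; and for $\alpha\ge 1$, $\|b\|_\varpi=O(n^{\alpha^*})$ gives a second term $n^{-1/2+\alpha^*}m_n^{1/2}(\log(pm_n))^{1/\alpha}$, missing the factor $m_n^{-\alpha^*}$. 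Your assertion that ``careful tracking of constants'' produces $K_n\asymp n^{-1/2+\alpha^*}m_n^{1/2-\alpha^*}$ is therefore the point where the argument fails: the Orlicz norm of $W_{i,kj}$ is genuinely $O(1)$ (the small variance does not shrink it), so no amount of bookkeeping in the unconditional GBO framework recovers the missing $m_n$ powers. You would need either the paper's conditioning step or an alternative Bernstein-type inequality that separates the variance from the $\psi_\alpha$ scale \emph{and} captures the higher spline moments $\mathbb{E}|b_k|^\varpi=O(m_n^{-1})$ in the tail regime; neither is supplied.
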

\begin{proof}
    From Remark A.1 in \cite{kuchibhotla2022moving}, conditional on $X_{ij}$'s, we obtain that
    \begin{align}
        \mathbb{E} &\left[ \underset{1 \le j \le p, 1 \le k \le m_n}{\max} |T_{kj}| \mid \{ X_{ij}, 1 \le j \le p, 1 \le i \le n\} \right] \nonumber \\& \qquad \le n^{-1/2}m_n^{1/2}\underset{1 \le j \le p, 1 \le k \le m_n}{\max} \Vert T_{kj} \Vert_{\Psi_{\alpha,L}} C_{\alpha} \left\{ \sqrt{\log(pm_n)} + L (\log(pm_n))^{1/\alpha} \right\}, \label{eqn:lem-max-max0}
    \end{align}
    for some constant $C_{\alpha}$ depending on $\alpha$. Similarly, the application of Theorem 3.1 in \cite{kuchibhotla2022moving} yields that
    \begin{align}
        \Vert T_{kj} \Vert_{\Psi_{\alpha,L}} &\le 2 e C_1(\alpha)  \Vert b \Vert_2, \label{eqn:lem-max-wsum} 
    \end{align}
    where $C_1(\alpha)$ is some constant involving $\alpha$ for which the explicit expression is given in Theorem 3.1 of \cite{kuchibhotla2022moving}, $\Vert b \Vert_2 =  \left(\sum_{i=1}^n B_{kj}^2(X_{ij}) \Vert \epsilon_i \Vert_{\psi_{\alpha}}^2 \right)^{1/2}$, and
    \begin{align*}
        L(\alpha) &= \frac{4^{1/\alpha}}{\sqrt{2}\Vert b \Vert_2} \times \begin{cases} \Vert b\Vert_{\infty}, & \text{if  } \alpha<1, \\
        4e \Vert b \Vert_{\varpi}/C_1(\alpha), & \text{if  } \alpha \ge 1, \end{cases}
    \end{align*}
    with $\varpi$ is the Holder conjugate satisfying $1/\alpha+1/\varpi=1$. Therefore, combination of (\ref{eqn:lem-max-max0}) and (\ref{eqn:lem-max-wsum}) yields

    \begin{align}
        \mathbb{E} &\left[ \underset{1 \le j \le p, 1 \le k \le m_n}{\max} |T_{kj}| \mid \{ X_{ij}, 1 \le j \le p, 1 \le i \le n\} \right] \nonumber \\& \qquad \le 2e C_{\alpha} C_1(\alpha) n^{-1/2}m_n^{1/2} \underset{1 \le j \le p, 1 \le k \le m_n}{\max} \Vert b \Vert_2   \sqrt{\log(pm_n)} \nonumber\\ &
        \qquad \qquad + n^{-1/2}m_n^{1/2}\begin{cases}  D_{1}(\alpha) \underset{1 \le j \le p, 1 \le k \le m_n}{\max} \Vert b \Vert_{\infty} (\log(pm_n))^{1/\alpha} & \text{ if } \alpha < 1 , \\  D_2(\alpha) \underset{1 \le j \le p, 1 \le k \le m_n}{\max} \Vert b \Vert_{\varpi} (\log(pm_n))^{1/\alpha} & \text{ if } \alpha \ge 1, \end{cases} \label{eqn:lem-max-max1}
    \end{align}
    where $D_1(\alpha)= \sqrt{2}e C_1(\alpha) C_{\alpha} 4^{1/\alpha}$ and $D_2(\alpha)= \sqrt{2}C_{\alpha} e^2 4^{1/\alpha+1}$.

    Let $s_{n1}=\underset{1 \le j \le p, 1 \le k \le m_n}{\max} \Vert b \Vert_2$, $s_{n2}= \underset{1 \le j \le p, 1 \le k \le m_n}{\max} \Vert b \Vert_{\infty}$, and $s_{n3}=\underset{1 \le j \le p, 1 \le k \le m_n}{\max} \Vert b \Vert_{\varpi}$. It follows that
\begin{align}
        \mathbb{E} &\left[ \underset{1 \le j \le p, 1 \le k \le m_n}{\max} |T_{kj}| \right] \nonumber \\& \qquad \le 2e C_{\alpha} C_1(\alpha) n^{-1/2}m_n^{1/2} \mathbb{E} \left( s_{n1} \right)  \sqrt{\log(pm_n)} \nonumber\\ &
        \qquad \qquad + n^{-1/2}m_n^{1/2} \begin{cases}  D_{1}(\alpha) \mathbb{E}\left( s_{n2} \right) (\log(pm_n))^{1/\alpha} & \text{ if } \alpha < 1 , \\  D_2(\alpha) \mathbb{E}\left( s_{n3} \right) (\log(pm_n))^{1/\alpha} & \text{ if } \alpha \ge 1. \end{cases} \label{eqn:lem-max-max9}
    \end{align}
    We now derive the upper bound for the $\mathbb{E}(s_{n3})$.  From Theorem 4.2 of Chapter 5 in \cite{devore1993constructive}, we obtain that
    \begin{align}
        \sum_{i=1}^n \mathbb{E}\left\{ B_{jk}^{\varpi}(X_{ij}) - \mathbb{E}B_{jk}^{\varpi}(X_{ij}) \right\}^2 \le C_3 n m_n^{-1}, \qquad \vert B_{jk}(X_{ij}) \vert \le 2,
        \label{eqn:lem-max-beta-1}
    \end{align}
    and 
    \begin{align}
        \underset{1 \le j \le p, 1 \le k \le m_n}{\max} \sum_{i=1}^n \mathbb{E}B_{jk}^{\varpi}(X_{ij}) \le C_4 n m_n^{-1}.
        \label{eqn:lem-max-beta-2}
    \end{align}
By Lemma A.1 of \cite{sara2008hdglm} and (\ref{eqn:lem-max-beta-1}) we obtain
\begin{align}
    \mathbb{E} &\left( \underset{1 \le j\le p, 1 \le k \le m_n}{\max} \left\vert \sum_{i=1}^n \mathbb{E}\left\{ B_{jk}^{\varpi}(X_{ij}) - \mathbb{E}B_{jk}^{\varpi}(X_{ij}) \right\} \right\vert \right) \nonumber \\ & \qquad \qquad \le \sqrt{2 C_3 n m_n^{-1} \log(2pm_n)} + 2^{\varpi} \log(2pm_n).
    \label{eqn:lem-max-beta-11}
\end{align}
Now, an application of Triangle inequality using (\ref{eqn:lem-max-beta-11}) and (\ref{eqn:lem-max-beta-2}) gives
\begin{align*}
    \mathbb{E}(s_{n3}^{\varpi}) &\le \sqrt{2 C_3 n m_n^{-1} \log(2pm_n)} + 2^{\varpi} \log(2pm_n)+ C_4 n m_n^{-1}.
\end{align*}
Therefore, using Lyapunov's inequality, we obtain that, for $1 < \varpi < \infty$,
\begin{align}
    \mathbb{E}(s_{n3}) &\le  \left(\sqrt{2 C_3 n m_n^{-1} \log(2pm_n)} + 2^{\varpi} \log(2pm_n)+ C_4 n m_n^{-1} \right)^{1/\varpi}.
    \label{eqn:lem-max-sn3}
\end{align}
Letting $\varpi=2$ in (\ref{eqn:lem-max-sn3}) yields the corresponding bound for $\mathbb{E}(s_{n1})$. By (\ref{eqn:lem-max-beta-1}), we can obtain that $\mathbb{E}(s_{n2}) \le C_5$. Consequently,
\begin{align}
        & \mathbb{E}\left[ T_n \right]  \le 2e C_{\alpha} C_1(\alpha) n^{-1/2}m_n^{1/2} \Big(\sqrt{2 C_3 n m_n^{-1} \log(2pm_n)}  \nonumber \\ &+ 4 \log(2pm_n)+ C_4 n m_n^{-1} \Big)^{1/2}  \sqrt{\log(pm_n)} + n^{-1/2}m_n^{1/2} \nonumber \\ &\times \begin{cases} C_5 D_{1}(\alpha)  (\log(pm_n))^{1/\alpha} & \text{ if } \alpha < 1 , \\  D_2(\alpha) \bigg(\sqrt{2 C_3 n m_n^{-1} \log(2pm_n)} \\  + 2^{\varpi} \log(2pm_n)+ C_4 n m_n^{-1} \bigg)^{1/\varpi} (\log(pm_n))^{1/\alpha} & \text{ if } \alpha \ge 1. \end{cases} \label{eqn:lem-max-max}
    \end{align}

Let $\alpha^*=\max\{0, \frac{\alpha-1}{\alpha}\}$. When $m_n\log(pm_n)/n \rightarrow 0$ as $n \rightarrow \infty$, the above bound simplifies to the following bound:
\begin{align*}
         \mathbb{E}\left[ T_n \right]  =  \left( \sqrt{\log(pm_n)}    + n^{-1/2+\alpha^*}m_n^{1/2-\alpha^*}  (\log(pm_n))^{1/\alpha} \right) O(1).
\end{align*}
    
\end{proof}

\begin{proof}[Proof of Theorem \ref{thm:oracle}]
Note that the oracle estimator $\widehat{\beta}^0= (\sqrt{m_n}\widehat{\mu}_0^0, \widehat{\beta}_1^0, \ldots, \widehat{\beta}_q^0)^T \in \mathbb{R}^{qm_n+1}$, minimizes the DPD loss function 
\begin{align}
\ell_n(\beta) := \frac{1}{n} \sum_{i=1}^n V_i(\beta; \sigma^2, \nu), \label{eqn:th1-ln} 
\end{align}
 where $V_i(\cdot)$ is defined in (\ref{eqn:vi}). We remark that penalization is not required for the oracle model as the true components are known. Using Taylor expansion, we have
\begin{align*}
    \frac{\partial \ell_n(\beta)}{\partial \beta} \bigg\vert_{\beta=\widehat{\beta}^0} &= \frac{\partial \ell_n(\beta)}{\partial \beta} + \frac{\partial^2 \ell_n(\beta)}{\partial \beta \partial \beta^T} \bigg\vert_{\beta=\overline{\beta}} ( \widehat{\beta}^0- \beta),
\end{align*}
where $\overline{\beta}=t\widehat{\beta}^0 + (1-t)\beta$, $t\in [0,1]$. Therefore,
\begin{align*}
    \widehat{\beta}^0- \beta &= -\left(\frac{\partial^2 \ell_n(\beta)}{\partial \beta \partial \beta^T} \bigg\vert_{\beta=\overline{\beta}} \right)^{-1} \frac{\partial \ell_n(\beta)}{\partial \beta}.
\end{align*}
Let $Z_i^0=(1/\sqrt{m_n}, B_{11}(X_{i1}), \ldots, B_{m_nq}(X_{iq}))^T$ be the corresponding spline basis for the first $q$ variables. By straightforward calculations, we have
\begin{align}
    \frac{\partial \ell_n(\beta)}{\partial \beta} &= -\frac{1+\nu}{n\sigma^2} \sum_{i=1}^n f_i^{\nu} \times (Y_i-Z_i^{0^T}\beta)Z_i^{0}, \label{eqn:oracle-1d}
\end{align}
and 
\begin{align}
    \frac{\partial^2 \ell_n(\beta)}{\partial \beta \partial \beta^T} \bigg\vert_{\beta=\overline{\beta}} &= -\frac{1+\nu}{n\sigma^2} \sum_{i=1}^n \left\{\frac{\nu}{\sigma^2} f_i^{\nu} \times (Y_i-Z_i^{0^T}\overline{\beta})^2Z_i^{0}Z_i^{0^T} - f_i^{\nu} Z_i^{0}Z_i^{0^T} \right\}. \label{eqn:oracle-2d}
\end{align}
First, we find the bound for (\ref{eqn:oracle-1d}). Let $\delta_i=\sum_{j=1}^q g_j(X_i)-g_{nj}(X_i)$. Observe that
\begin{align*}
    &\left\Vert -\frac{1+\nu}{n\sigma^2} \sum_{i=1}^n f_i^{\nu} \times (Y_i-Z_i^{0^T}\beta)Z_i^{0} \right\Vert = \left\Vert -\frac{1+\nu}{n\sigma^2} \sum_{i=1}^n f_i^{\nu} \times (\delta_i+\epsilon_i)Z_i^{0} \right\Vert   \\
    & \qquad \qquad \le \left\Vert -\frac{1+\nu}{n\sigma^2} \sum_{i=1}^n f_i^{\nu} \delta_i Z_i^{0} \right\Vert + \left\Vert -\frac{1+\nu}{n\sigma^2} \sum_{i=1}^n f_i^{\nu}\epsilon_i Z_i^{0} \right\Vert.
\end{align*}
From Fact 1 in (\ref{eqn:fact1}) and Lemma 1 in \cite{huang2010variable}, we have $\underset{i}{\max} ~ \delta_i  \le C_{12} q m_n^{-d}$ for some constant $C_{12}$. Using the fact that $\underset{i}{\max}~ f_i^{\nu} \le 1$, we obtain
\begin{align}
    & \left\Vert -\frac{1+\nu}{n\sigma^2} \sum_{i=1}^n f_i^{\nu}(Y_i-Z_i^{0^T}\beta^0)Z_i^{0} \right\Vert \nonumber \\ &\le C_{12} q m_n^{-d} \frac{1+\nu}{\sigma^2} \left\Vert \frac{1}{n}\sum_{i=1}^n Z_{i}^{0} \right\Vert +  \frac{1+\nu}{\sigma^2}  \left\Vert \frac{1}{n} \sum_{i=1}^n \epsilon_i Z_{i}^{0} \right\Vert \nonumber\\
    &=C_{12} q m_n^{-d-1/2} \frac{1+\nu}{\sigma^2} + \frac{1+\nu}{\sigma^2}  \left\Vert \frac{1}{n} \sum_{i=1}^n \epsilon_i Z_{i}^{0} \right\Vert \label{eqn:oracle-1d-a}
\end{align}
where the second step follows because of $n^{-1}\sum_{i=1}^n B_{kj}(X_{ij})=0$, for $ 1\le k \le m_n, 1\le j \le q$. Consider
\begin{align}
   & \left\Vert \frac{1}{n} \sum_{i=1}^n \epsilon_i Z_{i}^{0} \right\Vert^2  \nonumber \\ &= \sum_{j=1}^q\sum_{k=1}^{m_n} \left( \frac{1}{n} \sum_{i=1}^n \epsilon_i B_{kj}(X_{ij})\right)^2 + \left( \frac{1}{n\sqrt{m_n}} \sum_{i=1}^n \epsilon_i \right)^2 \nonumber\\
   &\le \underset{1 \le j \le q, 1\le k \le m_n}{\max} \frac{q}{n}  \left( \frac{m_n^{1/2}}{n^{1/2}} \sum_{i=1}^n \epsilon_i B_{kj}(X_{ij})\right)^2 + \left( \frac{1}{n\sqrt{m_n}} \sum_{i=1}^n \epsilon_i \right)^2 \nonumber\\
   &= O_p(1) (q/n) \left( \log(qm_n)    + n^{-1+2\alpha^*}m_n^{1-2\alpha^*}  (\log(qm_n))^{2/\alpha} \right) +O_p(1/nm_n), \label{eqn:oracle-1d-b}
\end{align}
where the first term in the last step follows from Lemma \ref{lem:maximal-weibull} and the second term follows from Condition \ref{as:5-error}. Combination of (\ref{eqn:oracle-1d}), (\ref{eqn:oracle-1d-a}), and (\ref{eqn:oracle-1d-b}) yields
\begin{align}
    & \left\Vert \frac{\partial \ell_n(\beta)}{\partial \beta}  \right\Vert \nonumber \\ &= O\left( m_n^{-d-1/2} \right) + O_p\left(  \sqrt{\frac{\log(qm_n)+ n^{-1+2\alpha^*}m_n^{1-2\alpha^*}  (\log(qm_n))^{2/\alpha}}{n} + \frac{1}{nm_n}} \right). \label{eqn:oracle-1d-final}
\end{align}
We now show that,
\begin{align}
    \left\Vert \left(-\frac{1+\nu}{n\sigma^2} \sum_{i=1}^n  \left\{\frac{\nu}{\sigma^2} f_i^{\nu} \times (Y_i-Z_i^{0^T}\overline{\beta})^2Z_i^{0}Z_i^{0^T} - f_i^{\nu} Z_i^{0}Z_i^{0^T} \right\} \right)^{-1} \right\Vert 
    & = O\left( m_n \right). \label{eqn:oracle-2d-final}
\end{align}
From (\ref{eqn:oracle-2d}), we write
\begin{align*}
     \sum_{i=1}^n \frac{1+\nu}{n}  \left\{\frac{1}{\sigma^2} f_i^{\nu}\times \left[1- \frac{\nu(Y_i-Z_i^{0^T}\overline{\beta})^2}{\sigma^2}\right]\right\}Z_i^{0}Z_i^{0^T}. 
\end{align*}
First, observe that $\left[1- \frac{\nu(Y_i-Z_i^{0^T}\overline{\beta})^2}{\sigma^2} \right] \le \exp\{- \frac{\nu(Y_i-Z_i^{0^T}\overline{\beta})^2}{2\sigma^2}\} \le 1$ for $i=1,\ldots,n$. Consequently $f_i^{\nu} \times \left[1- \frac{\nu(Y_i-Z_i^{0^T}\overline{\beta})^2}{\sigma^2}\right] \le 1/\sigma\sqrt{2\pi}$. Further, since $\exp\{\frac{\nu(Y_i-Z_i^{0^T}\overline{\beta})^2}{2\sigma^2}\} \ge \left[1- \frac{\nu(Y_i-Z_i^{0^T}\overline{\beta})^2}{\sigma^2} \right]$, we have the following lower bound
\begin{align*}
    f_i^{\nu}\left[1- \frac{\nu(Y_i-Z_i^{0^T}\overline{\beta})^2}{\sigma^2}\right] \ge \left[1- \frac{\nu(Y_i-Z_i^{0^T}\overline{\beta})^2}{\sigma^2} \right]^2/\sigma \sqrt{2\pi},
\end{align*}
which is always positive and takes zero  when $(Y_i-Z_i^{0^T}\overline{\beta})^2=\sigma^2/\nu$. Together, we have,
\begin{align}
    1/\sigma\sqrt{2\pi} \ge f_i^{\nu} \times \left[1- \frac{\nu(Y_i-Z_i^{0^T}\overline{\beta})^2}{\sigma^2}\right] \ge \left[1- \frac{\nu(Y_i-Z_i^{0^T}\overline{\beta})^2}{\sigma^2} \right]^2/\sigma \sqrt{2\pi}, \label{eqn:2d-wts-positive}
\end{align}
which concludes that the weights are positive and bounded. Therefore, by Lemma 3 in \cite{huang2010variable}, we prove (\ref{eqn:oracle-2d-final}). Consequently, the result (\ref{eqn:oracle-final}) follows from (\ref{eqn:oracle-1d-final}) and (\ref{eqn:oracle-2d-final}).
\end{proof}

%%%%%%%%%%%%%%%%%%%%%%%%%%%%%%%%%%%%%%%%%%%%%%%%%%%%%%%%%%%%%%%%
%%%%%%%%%%%%%%%%%%

\begin{proof}[Proof of Theorem \ref{thm:consistency}]
The main idea of the proof is similar to Theorem 3 in \cite{fan2011nonconcave}. However, the details are more involved due to the group penalty and the basis functions. Without loss of generality, we assume the first $q$ components, $g_j$, $j=1,\ldots,q$, are nonzero. Let $\beta=(\beta^{(1)^T}, \beta^{(2)^T})^T  \in \mathbb{R}^{pm_n+1}$ with $\beta^{(1)}=(\mu, \beta_1^T, \ldots, \beta_q^T)^T$ and $\beta^{(2)}= 0$.

\vspace{2em}

\emph{Step1: Consistency in the $q-$dimensional space: }

Let $Z_i^{(1)}=(1, B_{11}(X_{i1}), \ldots, B_{m_nq}(X_{iq}))^T$ and  \\
$Z_i^{(2)}= (B_{1q+1}(X_{iq+1}), \ldots, B_{m_np}(X_{ip}))^T$ be the basis functions corresponding to first $q$ nonzero functions (including intercept) and $p-q$ zero functions, respectively. First, we constrain the likelihood $L_{\nu}$ to $qm_n+1$ dimensional subspace as the following:
\begin{align}
    \overline{L}_{\nu}(\delta) &= -\ell_n(\delta) - \sum_{j=1}^q P_{\lambda}(\Vert \delta_j\Vert_2),
\end{align}
where $\delta=(1, \delta_1^T, \ldots, \delta_q^T)^T$ with $\delta_j=(\delta_{1j}, \ldots, \delta_{m_nj})^T$ and $\ell_n$ is defined in (\ref{eqn:th1-ln}). Note that we take negative signs in $\overline{L}_{\nu}(\delta)$  so that we now maximize the likelihood instead of minimizing $L_{\nu}$ in (\ref{eqn:plike}).
 We show that there exists a local maximizer $\widehat{\beta}^{(1)}$ of $\overline{L}_{\nu}(\delta)$ such that $\Vert \widehat{\beta}^{(1)} - \beta^{(1)} \Vert= O_p(\gamma_n)$. Define an event 
\begin{align}
    H_n &= \left\{ \underset{\delta \in \partial N_{\tau}}{\max} \overline{L}_{\nu}(\delta) < \overline{L}_{\nu}(\beta^{(1)}) \right\},
\end{align}
where $ \partial N_{\tau}$ is the boundary of the closed set
\begin{align*}
    N_{\tau} &= \{\delta \in \mathbb{R}^{qm_n+1}: \Vert \delta - \beta^{(1)}  \Vert \le \gamma_n \tau\},
\end{align*}
and $\tau \in (0, \infty)$ and rate $\gamma_n$. Note that on event, $H_n$ there exists a local maximizer $\widehat{\beta}^{(1)}$ of $\overline{L}_{\nu}(\delta)$ in $N_{\tau}$. Therefore, it is sufficient to show that $P(H_n)$ approaches to 1 as $n \rightarrow \infty$.

By Taylor expansion, for any $\delta \in N_{\tau}$, we have
\begin{align}
      \overline{L}_{\nu}(\delta) -\overline{L}_{\nu}(\beta^{(1)})  
    &=  (\delta- \beta^{(1)})^T V - \frac{1}{2} (\delta- \beta^{(1)})^T D (\delta- \beta^{(1)}), \label{eqn:th2-taylor}
\end{align}
where 
\begin{align*}
    V := \nabla \overline{L}_{\nu}(\beta^{(1)}) &= \frac{1+\nu}{n\sigma^2} \sum_{i=1}^n f_i^{\nu}\cdot (Y_i-Z_i^{(1)^T}\beta^{(1)}) Z_i^{(1)}  - \overline{P}_{\lambda}(\beta^{(1)}),
\end{align*}
with $\overline{P}_{\lambda}(\beta^{(1)})=( P'_{\lambda}(\Vert \beta_1\Vert_2) (D_1 \beta_1)^T/\Vert \beta_1\Vert_2, \ldots, P'_{\lambda}(\Vert \beta_q\Vert_2) (D_q \beta_q)^T/\Vert \beta_q\Vert_2)^T$ where $P'_{\lambda}$ is the derivative of the penalty function, and 
\begin{align}
    D &:= -\nabla^2 \overline{L}_{\nu}(\beta^{*}) =\sum_{i=1}^n \frac{1+\nu}{n}  \left\{\frac{1}{\sigma^2} f_i^{\nu}\left[1- \frac{\nu(Y_i-Z_i^{(1)^T}\beta^*)^2}{\sigma^2}\right]\right\}Z_i^{(1)}Z_i^{(1)^T} \nonumber \\ 
    &+ \text{diag}\left\{ \frac{P'_{\lambda}(\Vert \beta_j^* \Vert_2)}{\Vert \beta_j^* \Vert_2} D_j + \left[ \frac{P^{''}_{\lambda}(\Vert \beta_j^* \Vert_2)}{\Vert \beta_j^* \Vert_2^2} -\frac{P'_{\lambda}(\Vert \beta_j^* \Vert_2)}{\Vert \beta_j^* \Vert_2^3} \right] D_j \beta_j^* \beta_j^{*^T} D_j \right\}, \label{eqn:th2-taylor-2t}
\end{align}
where $\beta^*$ is on the line segment joining $\beta^{(1)}$ and $\delta$.
We note that the matrix $D_j$ is positive definite, and its  eigenvalues  are of order $m_n^{-1}$. %Suppose $\kappa_0=\underset{b \in N_{\tau}}{\max} ~\kappa(P; b)$, where $\kappa(P; b)$ is defined  in (\ref{eqn:kappa}). 
For any $\delta \in \partial N_{\tau}$, we have $\Vert \delta - \beta^{(1)} \Vert=\gamma_n \tau$ and $\beta^{*} \in N_{\tau}$.
By doing calculations analogous to (\ref{eqn:oracle-2d-final}), we obtain from (\ref{eqn:th2-taylor-2t}) that 
\begin{align}
    \lambda_{\text{min}}(D) &\ge C_{21}( m_n^{-1} - \lambda  \kappa_0 m_n^{-2} ).  
\end{align}
Thus by (\ref{eqn:th2-taylor}), we have
\begin{align*}
      \underset{\delta \in \partial N_{\tau}}{\max} \overline{L}_{\nu}(\delta) -\overline{L}_{\nu}(\beta^{(1)})  &\le  \gamma_n \tau [\Vert V \Vert- C_{21} \gamma_n \tau ( m_n^{-1} - \lambda \kappa_0 m_n^{-2})],
\end{align*}
which, together with Markov's inequality, gives 
\begin{align*}
    P(H_n) &\ge P\left( \Vert V \Vert^2 < C_{21}^2 \gamma_n^2 \tau^2 (m_n^{-1} - \lambda  \kappa_0 m_n^{-2})^2  \right) \\
    &\ge 1- \frac{\mathbb{E}\Vert V \Vert^2}{C_{21}^2 \gamma_n^2 \tau^2 (m_n^{-1}  - \lambda \kappa_0 m_n^{-2})^2}.
\end{align*}
Calculations analogous to (\ref{eqn:oracle-1d-final}) in Theorem \ref{thm:oracle} yield that
\begin{align*}
    \mathbb{E}\Vert V \Vert^2 &\le \mathbb{E} \left\Vert \frac{1+\nu}{n\sigma^2} \sum_{i=1}^n f_i^{\nu}\cdot (Y_i-Z_i^{(1)^T}\beta^{(1)}) Z_i^{(1)} \right\Vert^2 + \Vert \overline{P}_{\lambda}(\beta^{(1)}) \Vert^2 \\
     &\le O\left( m_n^{-2d-1} + \frac{\log(pm_n)}{n} + \frac{m_n^{1-2\alpha^*} (\log(p m_n))^{2/\alpha} }{n^{2-2\alpha^*}} + \frac{1}{nm_n} \right) + O(q \lambda^2 m_n^{-2}).
\end{align*}
Consequently,
\begin{align*}
    P(H_n) \ge 1- \frac{O\left( m_n^{-2d+1} + \frac{m_n^2\log(pm_n)}{n} +\frac{m_n^{3-2\alpha^*} (\log(p m_n))^{2/\alpha} }{n^{2-2\alpha^*}} + \frac{m_n}{n} + q \lambda^2 \right)}{C_{21}^2 \gamma_n^2 \tau^2 (1- \lambda \kappa_0m_n^{-1})^2}.
\end{align*}
By choosing $\gamma^2_n = m_n^{-2d+1} + \frac{m_n^2\log(pm_n)}{n} +\frac{m_n^{3-2\alpha^*} (\log(p m_n))^{2/\alpha} }{n^{2-2\alpha^*}}+ \frac{m_n}{n} + q \lambda^2$ and based on $\lambda \kappa_0 = o(1)$ (condition \ref{as:6-penalty}), we have
\begin{align*}
    P(H_n) &\ge 1-o(\tau^{-2}).
\end{align*}
It proves
\begin{align*}
    \sum_{j=1}^q \Vert \widehat{\beta}_j -\beta_j \Vert^2 = &  O_p\left(\frac{m_n^2\log(pm_n)}{n} +\frac{m_n^{3-2\alpha^*} (\log(p m_n))^{2/\alpha} }{n^{2-2\alpha^*}} + \frac{m_n}{n} \right)  \\
    &+ O\left( m_n^{-2d+1} +  q \lambda^2 \right).
\end{align*}
 
%\vspace{2em}

\noindent
\emph{Step 2: Sparsity:}

From Step 1 we have $\widehat{\beta}^{(1)} \in \mathbb{R}^{qm_n+1}$ is a local maximizer of $\overline{L}_{\nu}$ on $N_{\tau}$. We now prove that  $\widehat{\beta}= (\widehat{\beta}^{(1)^T}, 0^T)^T$ is indeed a maximizer of $-L_{\nu}$ on the space $\mathbb{R}^{pm_n+1}$.  Let $\xi=(1, \xi_1^T, \ldots,\xi_p^T)^T= \sum_{i=1}^n f_i^{\nu}\cdot (Y_i-Z_i^T\beta) Z_i$. Let $\widehat{\beta}_{S_0}= \widehat{\beta}^{(1)}$ and $\widehat{\beta}_{S_0^c}= \widehat{\beta}^{(2)}=0$. Consider the event,
\begin{align*}
    \mathcal{E}_2 &= \left\{ \Vert \xi_{S_0^c} \Vert_{\infty} \le u_n \right\}.
\end{align*}
 consider 
 \begin{align*}
     \Vert \xi_{S_0^c} \Vert_{\infty} &= \Vert  \sum_{i=1}^n f_i^{\nu}\cdot (Y_i-Z_i^T\beta) Z_i^{(2)} \Vert_{\infty} \\
     &\le n^{1/2}m_n^{-1/2} \cdot \underset{1 \le j \le p, 1 \le k \le m_n }{\max}\vert n^{-1/2}m_n^{1/2} \sum_{i=1}^n (Y_i-Z_i^{T}\beta) B_{kj}(X_{ij}) \vert\\
     &=   n^{1/2}m_n^{-1/2} \cdot T_{n},
 \end{align*}
 where $T_{n}=\underset{1 \le j \le p, 1 \le k \le m_n }{\max}\vert n^{-1/2}m_n^{1/2} \sum_{i=1}^n (Y_i-Z_i^{T}\beta) B_{kj}(X_{ij}) \vert$. We have,
 \begin{align*}
     P(\mathcal{E}_2) &\ge  P \left(T_n  \le \frac{u_n}{(n/m_n)^{1/2}} \right) \\
     &\ge 1 -  \frac{\mathbb{E}T_n}{u_n}(n/m_n)^{1/2} \\
     &\ge 1- \frac{\left( \sqrt{\log(pm_n)}    + n^{-1/2+\alpha^*}m_n^{1/2-\alpha^*}  (\log(pm_n))^{1/\alpha} \right)(n/m_n)^{1/2}}{u_n},
 \end{align*}
 which follows from Lemma 2 in \cite{huang2010variable} after ignoring the small spline approximation error. Choosing $u_n= n/m_n$, we obtain
 \begin{align*}
     P(\mathcal{E}_2) &\ge 1- \left(  \sqrt{\frac{m_n \log(pm_n)}{n}} + \frac{m_n^{1-\alpha^*}(\log(pm_n))^{1/\alpha}}{n^{1-\alpha^*}} \right)   \rightarrow 1, 
 \end{align*}
 under the assumption that $m_n\log(pm_n)/n \rightarrow 0$ as $n\rightarrow \infty$.
 Following Theorem \ref{thm:consistency} in \cite{fan2011nonconcave}, it is sufficient to show that, 
 \begin{align*}
     \Vert W \Vert_{\infty} &:=  n^{-1} \Vert \sum_{i=1}^n f_i^{\nu}\cdot (Y_i- Z_i^{T}(-\beta+ \widehat{\beta}+ \beta)) Z_i^{(2)} \Vert_{\infty}\\
     &\le n^{-1} \left[ \Vert \xi_{S_0^c} \Vert_{\infty} +  \Vert \sum_{i=1}^n f_i^{\nu}\cdot (Z_i^{T}(\widehat{\beta}- \beta)) Z_i^{(2)} \Vert_{\infty} \right] \\
     &\le o(1) + n^{-1} \Vert \sum_{i=1}^n Z_i^{T}(\widehat{\beta}- \beta) Z_i^{(2)} \Vert \\
     &\le o(1) + n^{-1} \Vert \sum_{i=1}^n Z_i^{(1)^T}(\widehat{\beta}^{(1)}- \beta^{(1)}) Z_i^{(2)} \Vert\\
     &= o(1) +  O(m_n^{-1}) \Vert (\widehat{\beta}^{(1)}- \beta^{(1)}) \Vert \\
     &=o(1),
 \end{align*}
 which concludes the proof.
\end{proof}

% \vskip .3cm
%\centerline{(Received ???? 20??; accepted ???? 20??)}\par
\end{document}